\newcommand{\Real}{\mathbb{R}}
\newcommand{\Shape}{\mathcal{O}}
\newcommand{\Boundary}{\partial\mathcal{O}}
\newcommand{\MAxis}{\mathcal{M}}
\newcommand{\Line}{\mathbf{s}}
\newcommand{\SDF}{\Phi}
\newcommand{\USDF}{\mathbf{d}}
\newcommand{\MField}{M} 
\newcommand{\Project}[1]{\mathrm{proj}_{#1}}
\newcommand{\Loss}[1]{\mathcal{L}_\textrm{#1}}
\DeclareMathOperator*{\argmax}{arg\,max}
\newcommand{\IE}{\mathds{E}}
\newtheorem{prop}{Proposition}[section]
\definecolor{turquoise}{cmyk}{0.65,0,0.1,0.3}
\definecolor{purple}{rgb}{0.65,0,0.65}
\definecolor{dark_green}{rgb}{0, 0.5, 0}
\definecolor{orange}{rgb}{0.8, 0.6, 0.2}
\definecolor{red}{rgb}{0.8, 0.2, 0.2}
\definecolor{darkred}{rgb}{0.6, 0.1, 0.05}
\definecolor{blueish}{rgb}{0.0, 0.3, .6}
\definecolor{light_gray}{rgb}{0.7, 0.7, .7}
\definecolor{pink}{rgb}{1, 0, 1}
\definecolor{greyblue}{rgb}{0.25, 0.25, 1}
\newcommand{\CIRCLE}[1]{\raisebox{.5pt}{\footnotesize \textcircled{\raisebox{-.6pt}{#1}}}}
\newcommand{\Figure}[1]{Figure~\ref{fig:#1}}
\newcommand{\Table}[1]{Table~\ref{tab:#1}}
\newcommand{\eq}[1]{\eqref{eq:#1}}
\newcommand{\Section}[1]{Section~\ref{sec:#1}}
\newcommand{\Appendix}[1]{Appendix~\ref{app:#1}}
\begin{document}
\title{Deep Medial Fields}

\author{Daniel Rebain}
\affiliation{
    \institution{University of British Columbia}
    \city{}
    \country{}
}
\author{Ke Li}
\affiliation{
    \institution{Simon Fraser University, Google Research}
    \city{}
    \country{}
}
\author{Vincent Sitzmann}
\affiliation{
    \institution{MIT CSAIL, Stanford University}
    \city{}
    \country{}
}
\author{Soroosh Yazdani}
\affiliation{
    \institution{Google Research}
    \city{}
    \country{}
}
\author{Kwang Moo Yi}
\affiliation{
    \institution{University of British Columbia}
    \city{}
    \country{}
}
\author{Andrea Tagliasacchi}
\affiliation{
    \institution{Google Research, University of Toronto}
    \city{}
    \country{}
}

\begin{abstract}
    Implicit representations of geometry, such as occupancy fields or signed distance fields~(SDF),  have recently re-gained popularity in encoding 3D solid shape in a functional form.
In this work, we introduce medial fields: a field function derived from the medial axis transform (MAT) that makes available information about the underlying 3D geometry that is immediately useful for a number of downstream tasks.
In particular, the medial field encodes the local thickness of a 3D shape, and enables O(1) projection of a query point onto the medial axis.
To construct the medial field we require nothing but the SDF of the shape itself, thus allowing its straightforward incorporation in any application that relies on signed distance fields.
Working in unison with the O(1) surface projection supported by the SDF, the medial field opens the door for an entirely new set of efficient, shape-aware operations on implicit representations.
We present three such applications, including a modification to sphere tracing that renders implicit representations with better convergence properties, a fast construction method for memory-efficient rigid-body collision proxies, and an efficient approximation of ambient occlusion that remains stable with respect to viewpoint variations.
\end{abstract}

\begin{teaserfigure}
\centering
\setOverpic{grid}
\begin{overpic} [width=0.33\linewidth] {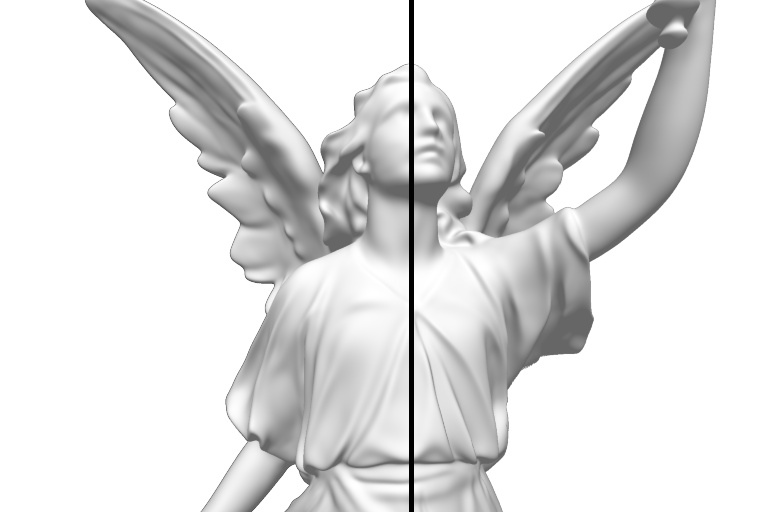}
\put(43,63){SDF}
\put(38,59){{\scriptsize($\approx$6 it/pix)}}
\put(54.5,63){DMF}
\put(54.5,59){{\scriptsize($\approx$4 it/pix)}}
\put(23,-5){rendering implicit functions}
\end{overpic}
\begin{overpic} [width=0.33\linewidth, trim = 50 60 50 0, clip] {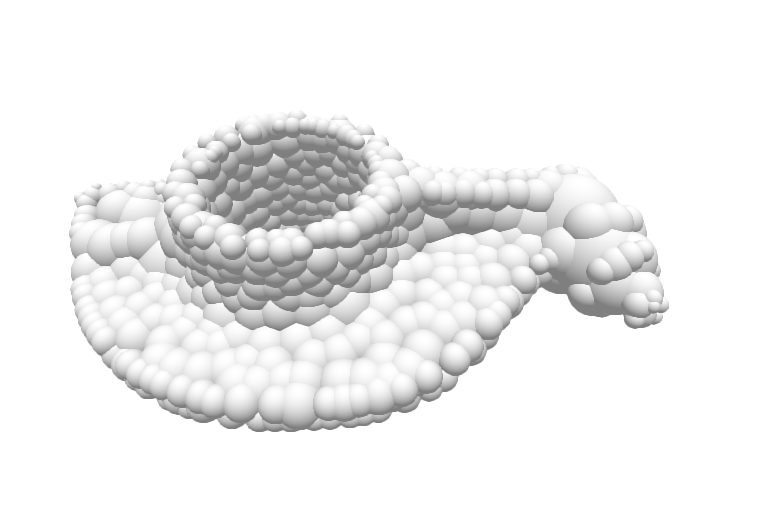}
\put(13,-5){generation of collision proxies}
\end{overpic}
\begin{overpic} [width=0.33\linewidth] {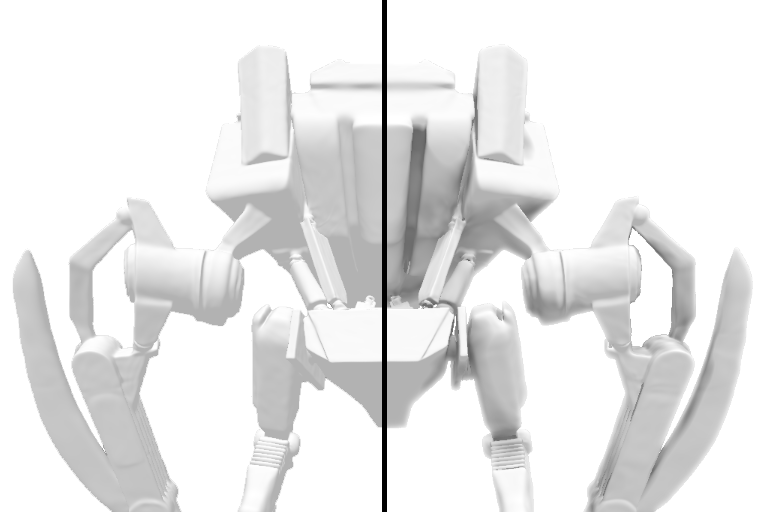}
\put(33,63){No AO}
\put(52,63){MFAO}
\put(20,-5){efficient ambient occlusion}
\end{overpic}
\vspace{-.5em}
\caption{
Deep medial field implicitly encode information related to the medial axis in a neural field function. Such a function can be used to accelerate iterative ray tracing of implicits  (left), quickly compute physics acceleration proxies (middle), or implement a new form of ambient occlusion (right).
}
\label{fig:teaser}
\end{teaserfigure}
\maketitle
\section{Introduction}
Neural implicit representations of 3D geometry and appearance have recently emerged as an attractive alternative to conventional discrete representations such as polygonal meshes or grids.
For example, \citeN{mescheder2019occupancy} and \citeN{park2019deepsdf} store geometry of a 3D object as occupancy and signed distance field, while \citeN{sitzmann2019srns} and \citeN{mildenhall2020nerf} additionally infer appearance via differentiable rendering.
These new neural representations enable spatially adaptive and high-resolution shape representations of 3D signals, while allowing the learning of priors over shape~\cite{chen2019learning} and appearance~\cite{Oechsle2019texture,sitzmann2019srns}.
In this work, we investigate a novel representation of 3D geometric information that gives O(1) access to quantities immediately useful to a variety of downstream tasks.

\newcommand{\footnoteone}{Note that the concept of local thickness is well defined
\textit{both} inside the shape (the local thickness of the object), as well as outside (the local thickness of the ambient space).}
\newcommand{\footnotetwo}{An empty sphere is a sphere that does not intersect the object's boundary; hence it either lies completely inside or completely outside the object.}
\paragraph{Medial field}
Specifically, we introduce the analytical concept of \textit{medial fields}.
The medial field, as in the case of occupancy and SDF, is a scalar function defined over $\Real^d$.
While signed distance functions retrieve the radius of an \textit{empty}\footnote{\footnotetwo} sphere centered at $x$ that is tangent to the surface of an object, medial fields query the \textit{local thickness}\footnote{\footnoteone} at the query point $x$.
Local thickness expresses the size of a sphere that is \textit{related} to the one retrieved by the SDF: it is tangent to the object surface at the same location, it fully contains the SDF sphere, and it is the \textit{largest} empty sphere satisfying these properties.
However, assuming we can easily and efficiently retrieve it with~O(1) complexity,~\textit{``What applications could it enable?''}

\paragraph{Applications}
In this paper we explore a (likely incomplete) portfolio of applications for medial fields.
First and foremost, we look at the classical problem of rendering implicit functions, and realize that the popular ``sphere tracing'' algorithm proposed by~\citeN{hart1996sphere} relies on iteratively querying empty spheres provided by the signed distance function.
Conversely, by exploiting medial fields, we can compute spheres that are larger, resulting in fewer, larger steps, and hence an overall improved convergence rate; see~\Figure{teaser}~(left)
We then consider real-time physics simulation, for which the efficient computation of collision proxies is of central importance, and that typically involve classical algorithms executed on polygonal meshes~\cite{ericson2004real}.
We instead propose a solution that extracts a volumetric approximation as a collection of spheres, where the locally largest empty spheres queried by the medial sphere result in an efficient coverage of space without using an excessive number of proxies; see~\Figure{teaser}~(middle).
Finally, we employ local thickness as a way to efficiently identify deep creases in the surface of objects where light is unlikely to reach, hence providing a method to approximate ambient occlusion shading effects within the realm of implicit shape representations; see~\Figure{teaser}~(right)
While these applications showcase the versatility of medial fields, we must ask the question: \textit{``How can we compute (and store) medial fields?''} 

\paragraph{Computation}
In computational geometry, the concept of local thickness is formalized by the Medial Axis Transform~\cite{blum}, a dual representation of solid geometry from which we derive the nomenclature ``medial field''.
Medial axis computation has been studied extensively in computer science, but the design of algorithms as effective as for other types of transforms~(e.g. the Fourier Transform) has been elusive; the problem is particularly dire in~$\Real^3$, where the computation of the medial surfaces from polygonal meshes essentially remains an open challenge~\cite{survey}.
We overcome these issues by expressing the problem of medial field computation as a continuous constrained satisfaction problem, which we approach 
by stochastic optimization, and that only requires querying the signed distance field.
Further, rather than relying on discrete grids to store medial fields, whose memory requirements are excessive at high resolutions, we compactly store the medial field function within the parameters of a neural network.

\noindent
To summarize, in this paper we introduce:
\vspace{-.5em}
\begin{itemize}[leftmargin=*]
\setlength\itemsep{0em}
\item ``medial fields'' as a way to augment signed distance fields with meta-information capturing local thickness of both a shape and its complement space;  
\item a portfolio of applications that leverage medial fields for efficient rendering, generation of collision proxies, and deferred rendering;
\item medial field computation as a constraint optimization problem that stores the field function within the parameters of a deep neural network for compact storage, and O(1) query access time.
\end{itemize}

\section{Related works}
Our technique is tightly related to implicit neural representations (\Section{representations}), as well as their efficient rendering~(\Section{rendering}).
Medial fields are also heavily inspired by medial axis techniques, which we also briefly review~(\Section{medial}). 

\subsection{Neural implicit representations}
\label{sec:representations}
Inspired by classic work on implicit representation of geometry~\cite{bloomenthal1997introduction,blinn1982generalization}, a recent class of models has leveraged fully connected neural networks as continuous, memory-efficient representations of 3D scene geometry, by parameterizing either a distance field~\cite{park2019deepsdf,michalkiewicz2019implicit,atzmon2019sal,gropp2020implicit,sitzmann2019srns,jiang2020local,peng2020convolutional,chibane2020ndf} or the occupancy function~\cite{mescheder2019occupancy,chen2019learning} of the 3D geometry.
Learning priors over spaces of neural implicit representations enables reconstruction from partial observations~\cite{park2019deepsdf,mescheder2019occupancy,chen2019learning,sitzmann2020metasdf}.
To learn distance fields when ground-truth distance values are unavailable, we may solve the Eikonal equation~\cite{atzmon2019sal,sitzmann2019siren,gropp2020implicit}, or leverage the property that points may be projected onto the surface by stepping in the direction of the gradient~\cite{ma2020neural}.
One may leverage hybrid implicit-explicit representations, by locally conditioning a neural implicit representation on features stored in a discrete data structure such as a voxel grids~\cite{jiang2020local,peng2020convolutional,chabra2020deep}, octrees~\cite{takikawa2021neural}, or local Gaussians~\cite{genova2019learning}.
However, such hybrid implicit-explicit representations lose the compactness of monolithic representations, complicating the learning of shape spaces.
In this work, we propose to parameterize 3D shape via the medial field, which parameterizes the local thickness of a 3D shape, and gives O(1) access to a number of quantities immediately useful for downstream tasks.

\subsection{Rendering implicits}
\label{sec:rendering}
Rendering of implicit shape representations relies on the discovery of the first level set that a camera ray intersects.
For distance fields, sphere tracing~\cite{hart1996sphere} enables fast root-finding. 
This algorithm known pathological cases, which have been addressed with heuristics~\cite{balint2018accelerating,korndorfer2014enhanced}, as well as coarse-to-fine schemes~\cite{liu2020dist}.
Ours is complementary to these approaches, and we focus our comparison on the core algorithm.
Additionally, sphere tracing can be generalized to the rendering of deformed implicit representations~\cite{seyb2019non}.

For neural implicit shape representations, differentiable renderers have been proposed to learn implicit representations of geometry given only 2D observations of 3D scenes~\cite{sitzmann2019srns,liu2020dist,Niemeyer2020DVR,yariv2020multiview}.
Alternatively, one may parameterize density and radiance of a 3D scene, enabling volumetric rendering~\cite{mildenhall2020nerf}, or combine volumetric and ray-marching based approaches~\cite{oechsle2021unisurf}.
As rendering of neural implicit representations requires hundreds of evaluations of the distance field per ray, hybrid explicit-implicit representations have been proposed to provide significant speedups~\cite{takikawa2021neural}.
As we will demonstrate, the proposed Deep Medial Fields allow fast rendering as they require significantly fewer network evaluations per ray, without relying on a hybrid implicit-explicit representation.

\subsection{Medial axis transform (MAT)}
\label{sec:medial}
The medial axis transform provides a \textit{dual} representation of solid geometry as a collection of spheres.
Computing the medial axis is a challenging problem, due to both the \textit{instability} of this representation with respect to noise~\cite[Fig.3]{lsmat}, and the lack of techniques to compute the medial surfaces given a polygonal mesh~\cite{survey}.
Nonetheless, spherical representations ``inspired'' by the MAT have found widespread use in applications, including shape approximation for static~\cite{thiery2013sphere} and dynamic~\cite{thiery2016animated} geometry, efficient closest point computation~\cite{hmodel}, and volumetric physics simulation~\cite{viper}.
There has been work on constructing the MAT with neural networks~\cite{hu2019mat}, but to the best of our knowledge, ours is the first work to encode medial information in an implicit neural representation.
\section{Method}
We start by reviewing the basics of implicit and medial representations (\Section{definitions}), and then introduce the analytical concept of medial fields (\Section{medial_field}).
We then propose a variational formulation of medial fields which allows us to formalize it without requiring us to explicitly compute the medial axis (\Section{variational}), as well as a way of implementing this approach with neural networks (\Section{deep_medial_fields}).

\subsection{Background}
\label{sec:definitions}
Let us consider a (solid) shape $\Shape$ in $d$-dimensional space as partitioning all points $x \in \Real^d$ as belonging to either its interior $\Shape^-$, exterior $\Shape^+$, or (boundary) surface $\Boundary$.
The signed and unsigned distance fields~(respectively SDF and UDF) implicitly represent a shape as:
\begin{align}
\SDF(x) = \begin{cases} 
  +\USDF(x) & x \in \Shape^+ \\
  -\USDF(x) & x \in \Shape^- \\
  0 & x \in \Boundary
\end{cases}
\;,
\quad
\USDF(x) = \min_{y \in \Boundary} ||x - y||
\;,
\end{align}
where the term implicit refers to the fact that the shape boundary is indirectly defined as the zero-crossing of the field.

\begin{figure}[t]
\begin{center}
\fbox{\includegraphics[height=.4\linewidth]{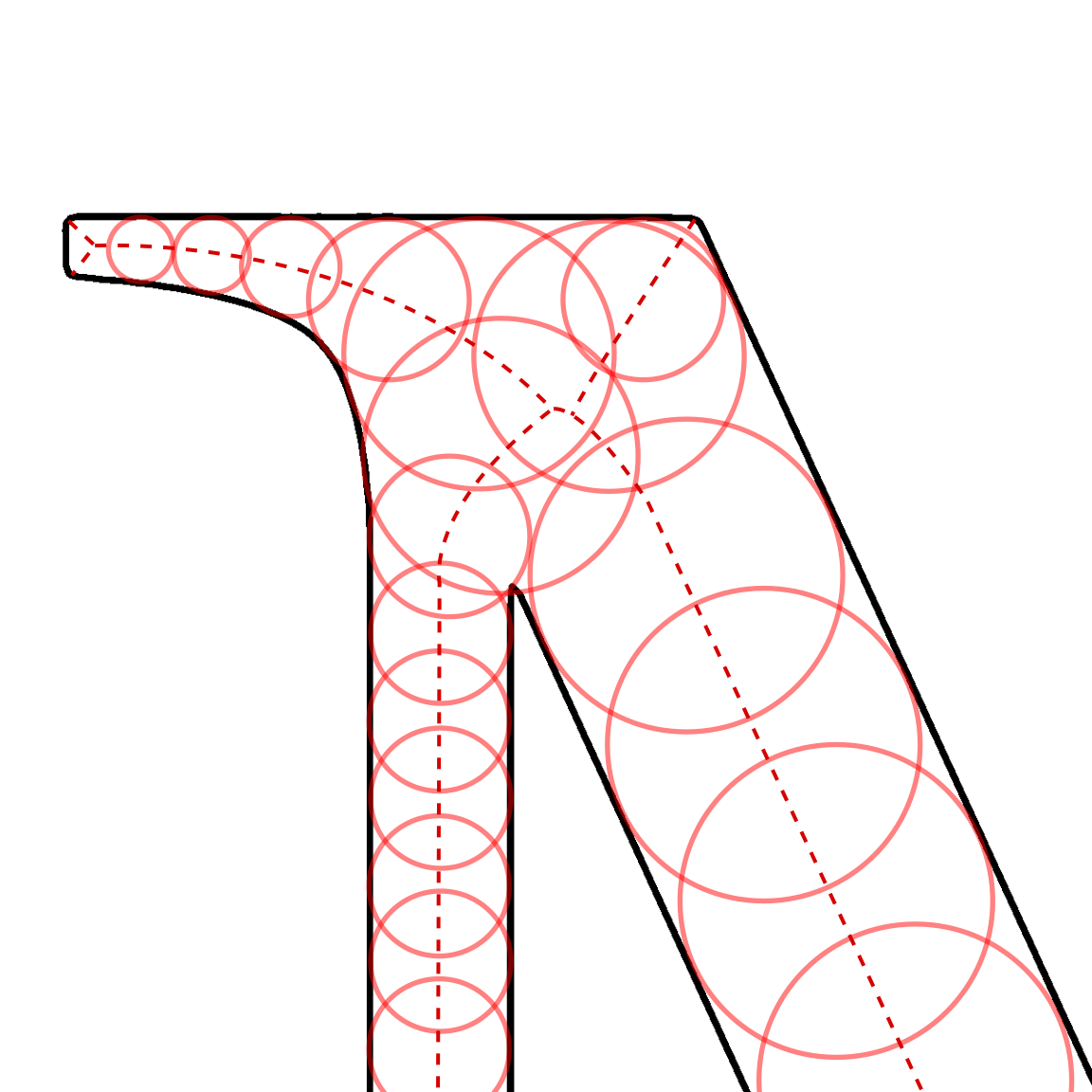}}
\hspace{.2in}
\fbox{\includegraphics[height=.4\linewidth]{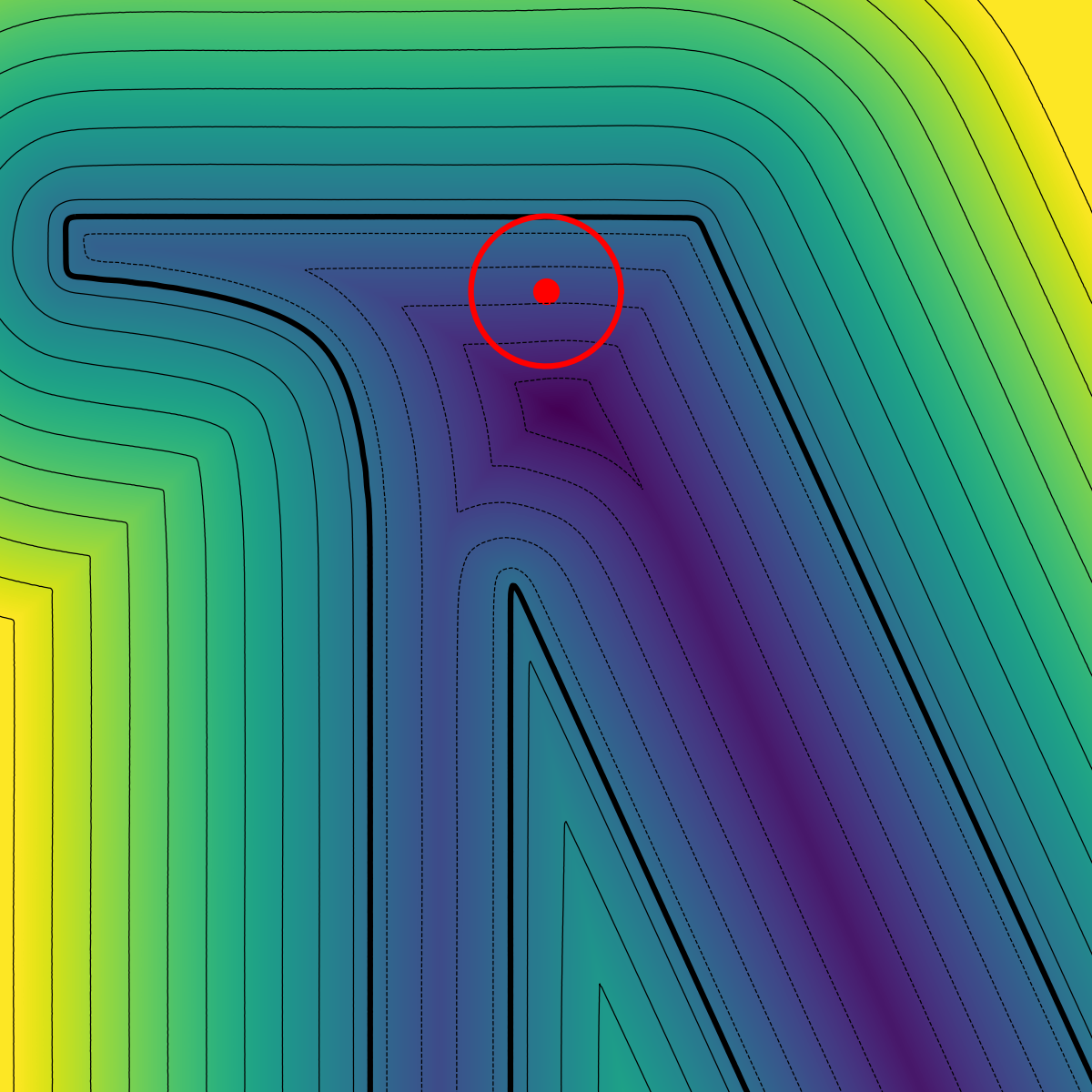}}
\includegraphics[height=.4\linewidth]{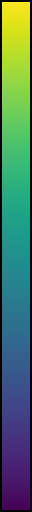}
\end{center}
\vspace{-1em}
\caption{
\textbf{Medial axis and the SDF} --
(left) Interpretation of the medial axis as the set of maximally inscribed sphere.
(right) The signed distance function $\SDF(x)$ of the shape, and a tangent sphere.
}
\label{fig:maxis_def}
\end{figure}

\paragraph{The medial axis}
The Medial Axis Transform (MAT) of a shape is the set of ``maximally inscribed spheres''. 
The \emph{medial axis} $\MAxis$ can then be defined as the set of all centers of these spheres:
\begin{align}
    \MAxis = \{x \in \Real^d :\, \forall \delta \in \Real^d \setminus \mathbf{0}, \USDF(x) + ||\delta|| > \USDF(x + \delta) \}
    \;.
    \label{eq:medial}
\end{align}
As illustrated in~\Figure{maxis_def}, note that~$\USDF(x)$ is equal to the radius of an \textit{empty} sphere centered at $x$ and tangent to~$\Boundary$.
Any tangent sphere that can not be grown to a tangent sphere with a $||\delta||$-larger radius by moving the center with some offset $\delta$ is ``maximally inscribed'', and therefore a \emph{medial sphere}.
While there exist multiple ways to define the medial axis~\cite{survey}, we choose this definition as it will allow us to formulate the medial axis in terms of a \textit{field} function: the ``medial field''.

\begin{figure}[t]
\begin{center}
\fbox{\includegraphics[height=.4\linewidth]{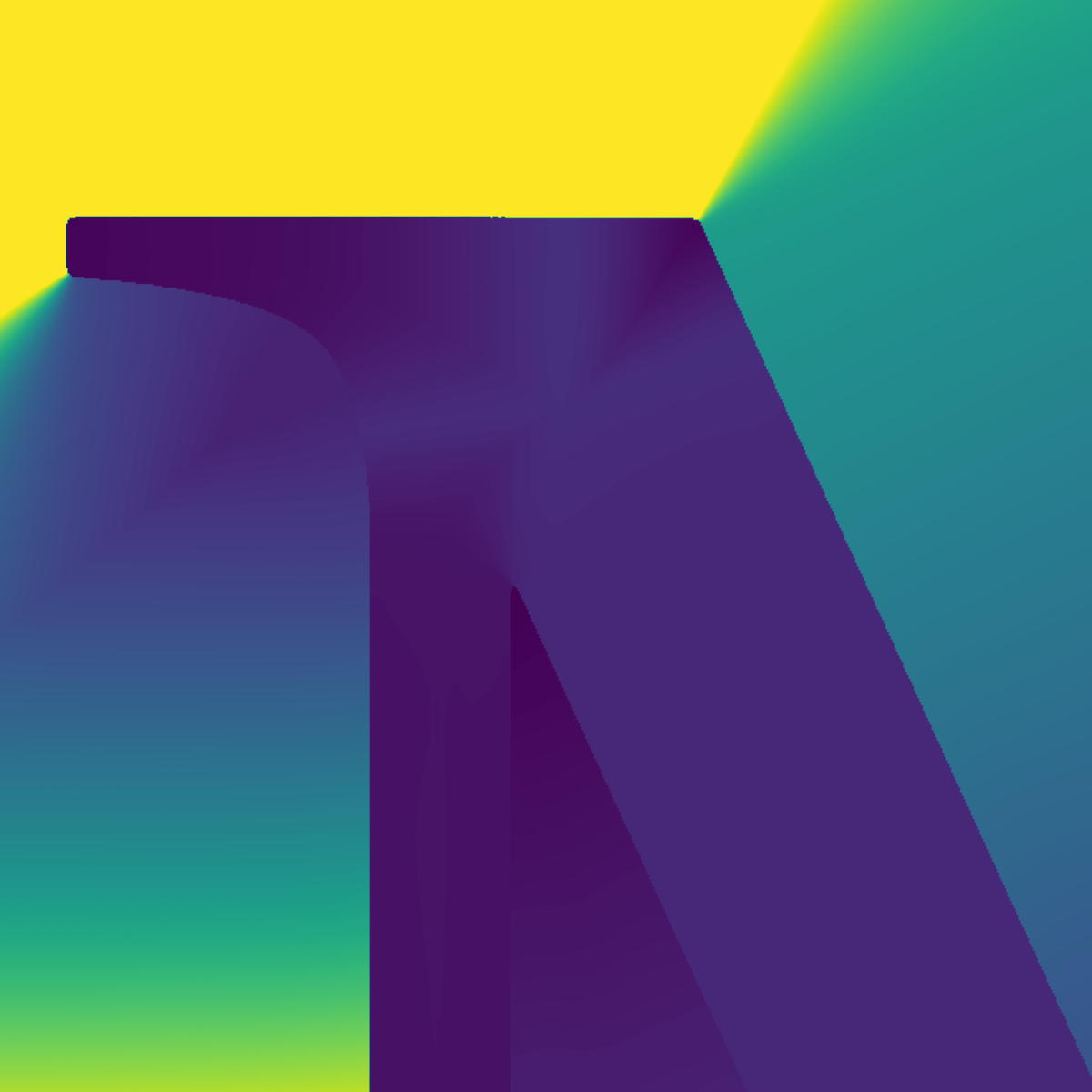}}
\includegraphics[height=.4\linewidth]{fig/colorbar.png}
\hspace{.2in}
\fbox{\includegraphics[height=.4\linewidth]{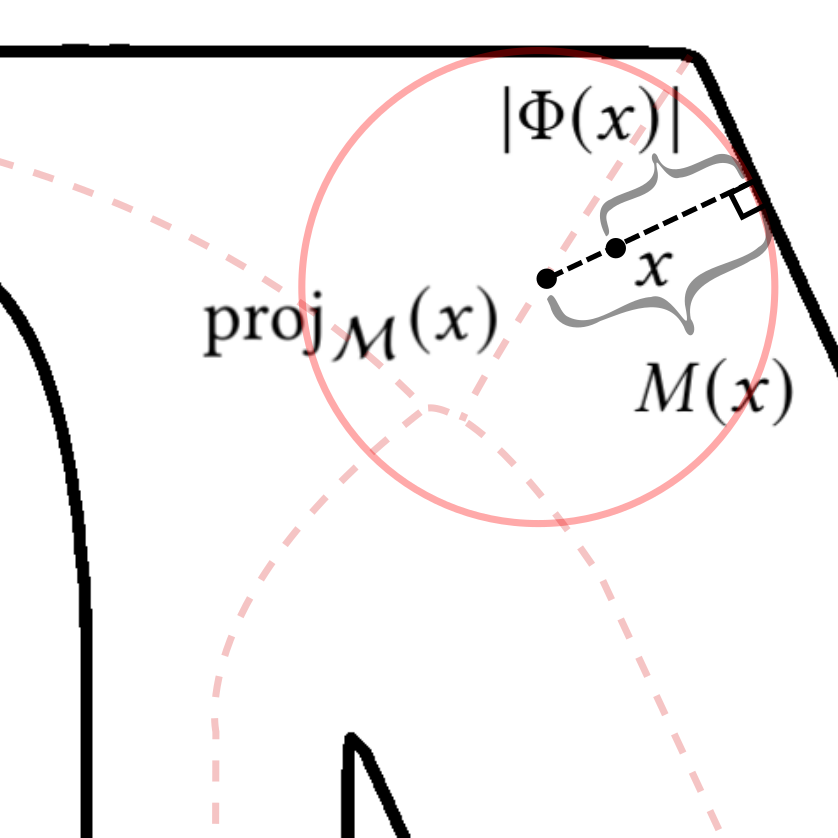}}
\end{center}
\vspace{-1em}
\caption{
\textbf{Medial field} --
(left) Visualization of the scalar medial field;
(right) The notation we use to define the medial field; note that for $x \in \MAxis$ the medial field and the unsigned distance function satisfy~$\MField(x) = |\SDF(x)|$.
}
\label{fig:mfield_def}
\end{figure}

\subsection{Medial field}
\label{sec:medial_field}
For a point $x \in \Shape^-$, we informally define the medial field as the~``local thickness'' of the shape at $x$, and equivalently for the shape's complement space when $x \in \Shape^+$; see~\Figure{mfield_def}~(left).
To formalize this construct, let us start by noting that $\SDF(x)$ allows us to project a point $x$ onto the closest point of the shape surface $\Boundary$ as:
\begin{align}
    \Project{\Shape}(x) = x - \nabla \SDF(x) \SDF(x)
    \;.
\end{align}
Both $x$ and $\Project{\Shape}(x)$ lie on a line segment $\Line(x)$, known as the ``medial spoke''~\cite{siddiqi2008medial}, which begins at $\Project{\Shape}(x)$ and ends at a point on the medial axis; see \Figure{mfield_def}~(right).
We call this point $\Project{\MAxis}(x)$,
and use it to define the medial field as the scalar function:
\begin{align}
    \MField(x) = |\SDF(\Project{\MAxis}(x))|
    \;.
    \label{eq:mf_definition}
\end{align}
In other words, the medial field $\MField(x)$ is the radius of the medial sphere centered at $\Project{\MAxis}(x)$.
Equivalently, the medial field is the length of the medial spoke $\Line(x)$.
$\MField(x)$ is well defined everywhere \textit{except} at $x \in \Boundary$ where we could have a value discontinuity~--~the medial field for interior/exterior might not match.
While above we employ $\Project{\MAxis}$ to define the medial field, the opposite is also possible:
\begin{align}
\Project{\MAxis}(x) = x + \nabla |\SDF(x)| (\MField(x) - |\SDF(x)|)
\;,
\label{eq:medialproj}
\end{align}
but note that $\Project{\MAxis}(x)$ \textbf{is not} the closest-point projection of $x$ onto $\MAxis$, but rather the intersection of the medial spoke with $\MAxis$.

\subsection{Variational medial fields}
\label{sec:variational}
To use medial fields in an application, one must first compute it.
With the definitions above, given $\SDF(x)$ and $\nabla \SDF(x)$, the medial field could be computed explicitly by querying the medial radius at the \textit{intersection} of the medial spoke and the medial axis.
Unfortunately, computing the medial axis, especially in $\Real^3$, is a challenging and open problem~\cite{survey}.
Rather than defining the medial field constructively as in \eq{mf_definition}, we define the medial field in a variational way, so to never require any knowledge about the geometry of the medial axis $\MAxis$.
More formally, we define the medial field as the function that satisfies the following set of \textit{necessary and sufficient} constraints (\Appendix{proof}):
\begin{align}
\forall x &\in \Real^d \setminus \Boundary,
&\MField^*(x) \geq |\SDF(x)|
\;,
\label{eq:maximality_constraint}
\\
\forall x &\in \Real^d \setminus \Boundary,
&\MField^*(x) = |\SDF(\Project{\MAxis^*}(x))|
\;,
\label{eq:inscription_constraint}
\\
\forall x &\in \Real^d \setminus (\Boundary \cup \MAxis),
&\nabla \MField^*(x) \cdot \nabla \SDF(x) = 0
\;.
\label{eq:orthogonality_constraint}
\end{align}

\subsection{Deep medial fields}
\label{sec:deep_medial_fields}
Following the recent success in \textit{compactly} storing 3D field functions within neural networks~(e.g.~occupancy~\cite{chen2019learning,mescheder2019occupancy}, signed distance fields~\cite{park2019deepsdf,atzmon2019sal}, and radiance~\cite{mildenhall2020nerf,rebain2021derf}), we propose to store the medial field within the parameters $\theta$ of a deep neural network $M_\theta(x)$.
While it is in theory possible to to store the medial field values in a grid, for a 3D shape this requires prohibitively large $O(N^3)$ memory where $N$ is linear resolution.
To store the medial field within the network parameters $\theta$, we enforce the constraints of \Section{variational} stochastically over random points sampled over $\Real^d$ via the losses:
\begin{align}
\Loss{maximal} &= \IE_{x\sim\Real^d}\left[ max(|\SDF(x)| - \MField_\theta(x), 0)^2 \right]
\;,
\\
\Loss{inscribed} &= \IE_{x\sim\Real^d}\left[ (|\SDF(\Project{\MAxis}(x))| - \MField_\theta(x))^2 \right]
\;,
\\
\Loss{orthogonal} &= \IE_{x\sim\Real^d}\left[ (\nabla \MField_\theta(x) \cdot \nabla \SDF(x))^2 \right]
\;.
\end{align}
The architecture of $M_\theta(x)$ is based on simple multi-layer perceptrons (MLPs) that is detailed and analyzed in~\Section{implementation}, while the derivatives $\nabla$ are computed by auto-differentiation in JAX~\cite{jax}.

\begin{figure}[t]
\begin{minipage}{\linewidth}
\begin{center}
{
\setlength{\fboxrule}{.5pt}
\fbox{\includegraphics[width=.945\linewidth]{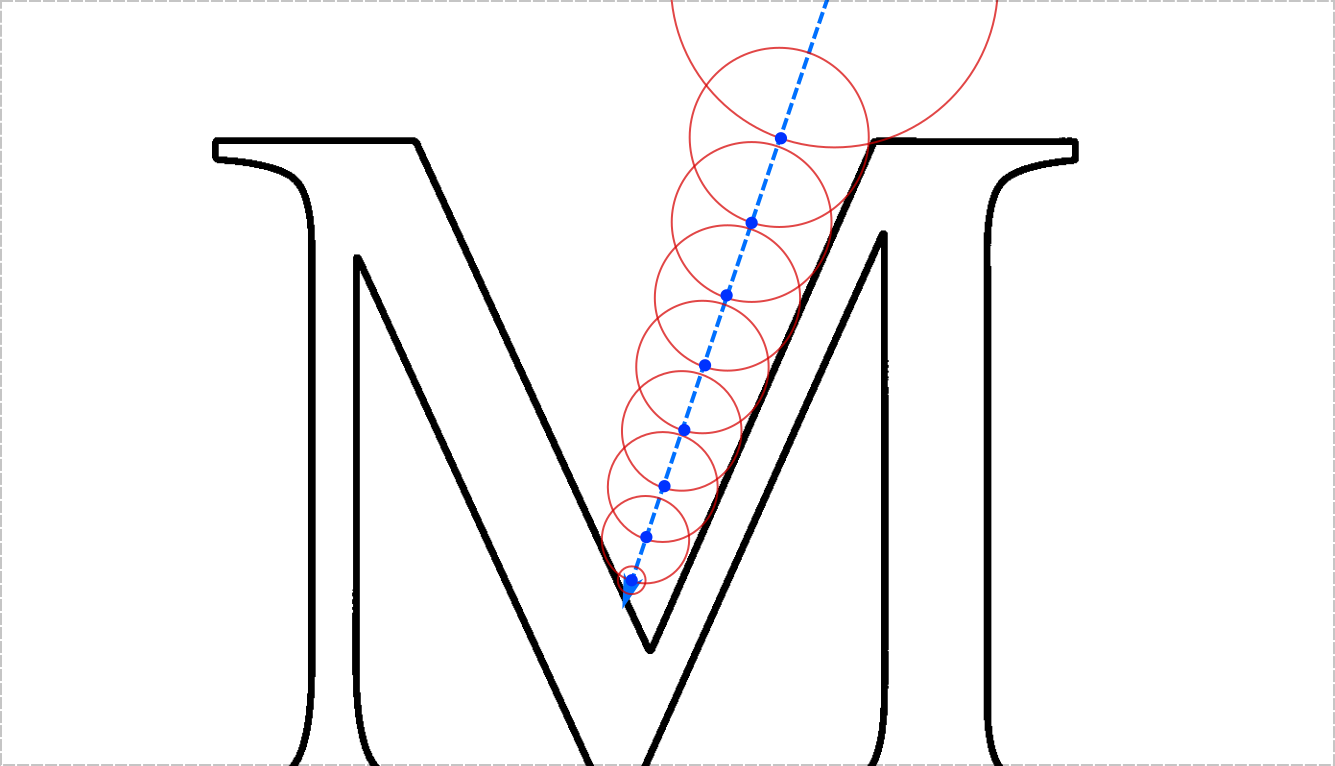}}
\includegraphics[width=.95\linewidth]{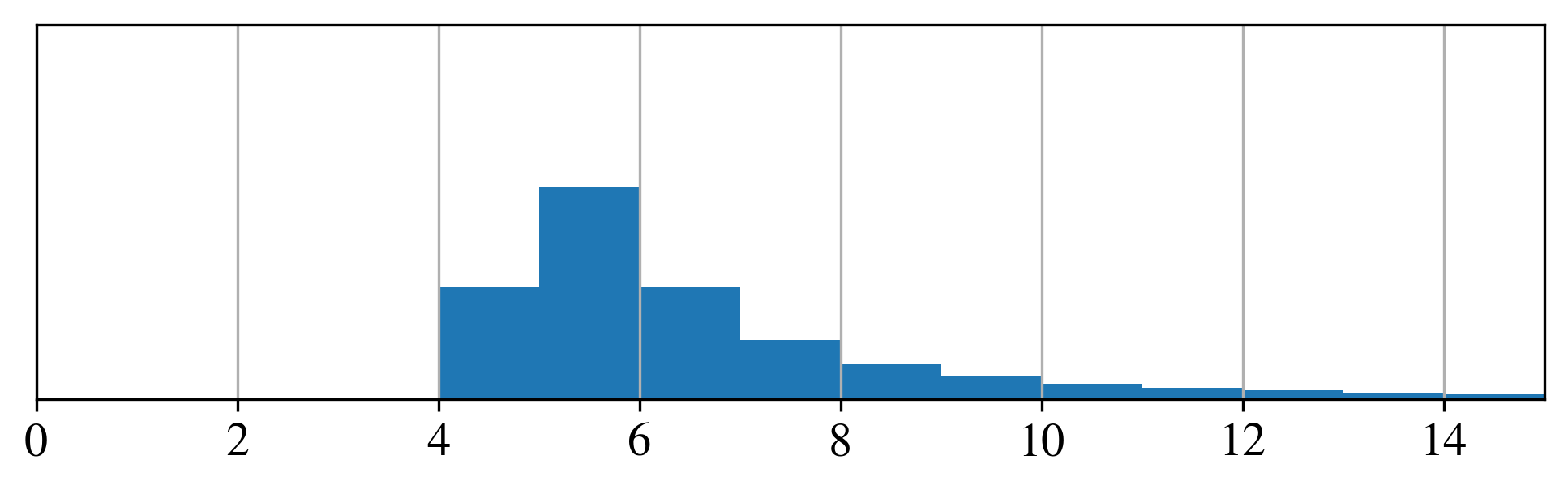}
\\[-.25em]
{
\setlength{\fboxsep}{2pt}
\fbox{%
\begin{minipage}{.925\linewidth}
\SetAlgoNoLine
\KwIn{Ray direction $d$ and origin $o$}
\KwOut{Position $x$ of the ray intersection with $\Boundary$}
$x$ = $o$\\
\Repeat{$|\SDF(x)| < \epsilon$}{
    $x \gets x + \SDF(x) d$
}
\end{minipage}
} 
}}
\end{center}
\vspace{-1em}
\caption{
\textbf{Sphere tracing} --
(bottom) The sphere tracing algorithm introduced by~\cite{hart1996sphere} results in a long-tailed distribution of iterations when rendering 3D scenes (middle).
This is caused by pathological configurations where rays graze the surface of objects (top).
Note the histogram is computed across all of our 3D test scenes, and the 2D example is illustrative in purpose.
}
\label{fig:spheretracing}
\end{minipage}
\end{figure}
\begin{figure}[t]
\begin{center}
{
\setlength{\fboxrule}{.5pt}
\fbox{\includegraphics[width=.945\linewidth]{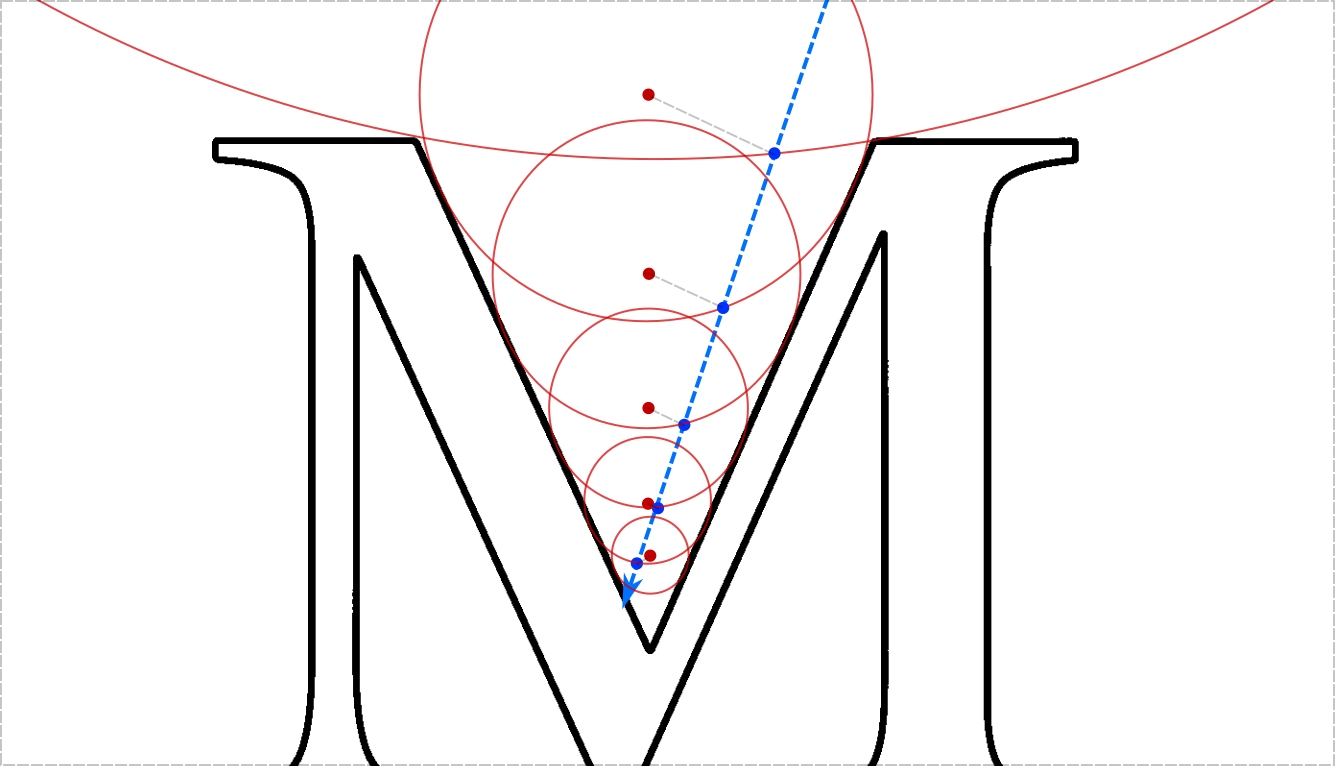}}
\includegraphics[width=.95\linewidth]{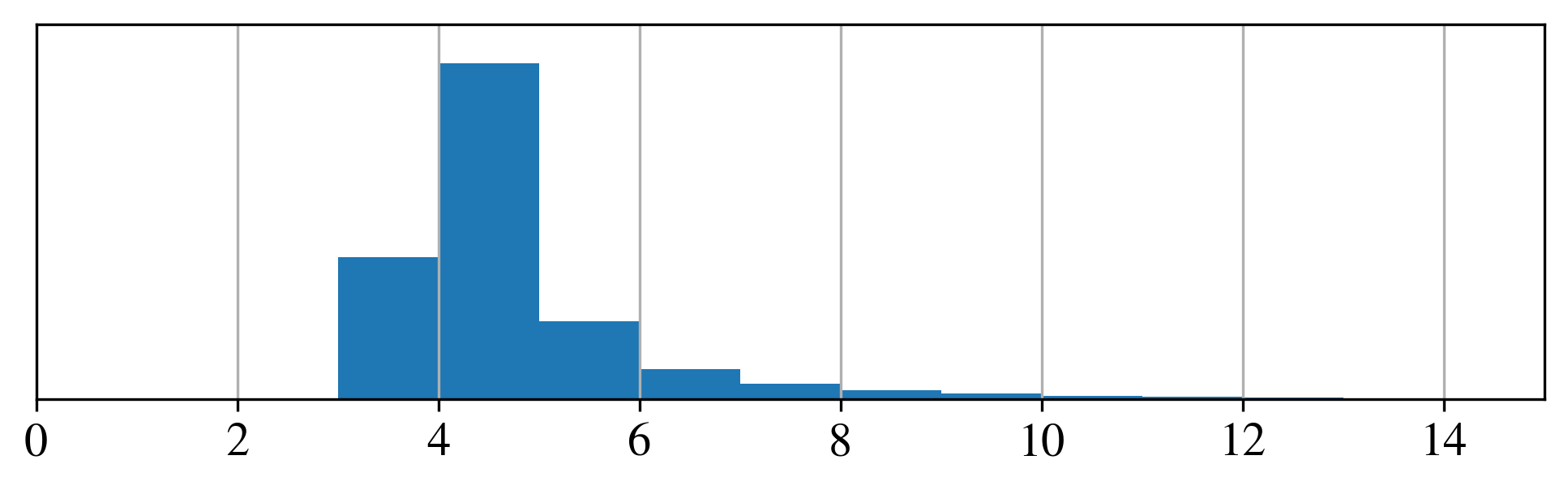}
\\[-.25em]
{\setlength{\fboxsep}{2pt}
\fbox{%
\begin{minipage}{.925\linewidth}
\SetAlgoNoLine
\KwIn{Ray direction $d$ and origin $o$}
\KwOut{Position $x$ of the ray intersection with $\Boundary$}
$x$ = $o$ \\
\Repeat{$|\SDF(x)| < \epsilon$}{
    $\Project{\MAxis}(x)$ = $x + \nabla |\SDF(x)| (\MField(x) - |\SDF(x)|)$ \\
    $\beta$ = $(\Project{\MAxis} - x) \cdot d$ \\
    $\alpha$ = $\sqrt{\beta^2 - (||\Project{\MAxis}(x) - x||_2^2 - \MField(x)^2)}$ \\
    $s$ = $\alpha + \beta$ \\
    $x \gets x + s d$
}
\end{minipage}
}}} 
\end{center}
\vspace{-1em}
\caption{
\textbf{Medial sphere tracing} --
Our algorithm exploits the medial field to use larger spheres to advance tracing along the ray.
As such, the expected number of query iterations is smaller and its distribution has a much shorter tail, leading to faster render time.
Note the histogram is computed across all of our 3D test scenes, and the 2D example is illustrative in purpose.
}
\label{fig:medialtracing}
\end{figure}

\section{Applications}
\label{sec:applications}

\subsection{Rendering implicits}
An algorithm frequently used in conjunction with SDF representations is \emph{sphere tracing}~\cite{hart1996sphere}.
The SDF representations ensure that for any point $x$ in space, a sphere centered at $x$ with radius $|\SDF(x)|$ does not cross the surface.
As such, the SDF value can be used to bound the step size of a ray-marching algorithm in a way that guarantees that overstepping will not occur.
In many cases, such as the case of a ray pointed directly orthogonal to a flat surface, sphere tracing will converge in one or very few iterations, making it an attractive option for rendering which is frequently used in applications that handle implicit surfaces~\cite{seyb2019non,sitzmann2019srns,yariv2020multiview,oechsle2021unisurf,takikawa2021neural}.
However, there are pathological cases in which sphere tracing may take arbitrarily many iterations for a ray to converge.
An example is shown in \Figure{spheretracing}, in which the step sizes become very small as the ray passes close to a surface which it does not intersect.

\paragraph{Medial sphere tracing}
To address this shortcoming of the sphere tracing algorithm, we propose a modification that takes advantage of the additional information encoded in the medial field.
Standard sphere tracing does not take advantage of the \textit{direction} of a ray to avoid cases where the step size should not approach zero near the surface, as all queried spheres are centered on the query point.
Medial spheres, on the other hand, do not suffer this limitation.
As a query point $x$ approaches a smooth surface, the projected medial sphere centered at $\Project{\MAxis}(x)$ will approach a \emph{non-zero} radius that depends on the local thickness of the shape's complement in the region containing $x$.
By using the medial sphere to advance the ray-marching process, many pathological situations in sphere tracing can be avoided; see \Figure{medialtracing}.

\begin{figure}[t]
\begin{center}
\includegraphics[width=.48\linewidth, trim=50 80 40 100, clip]{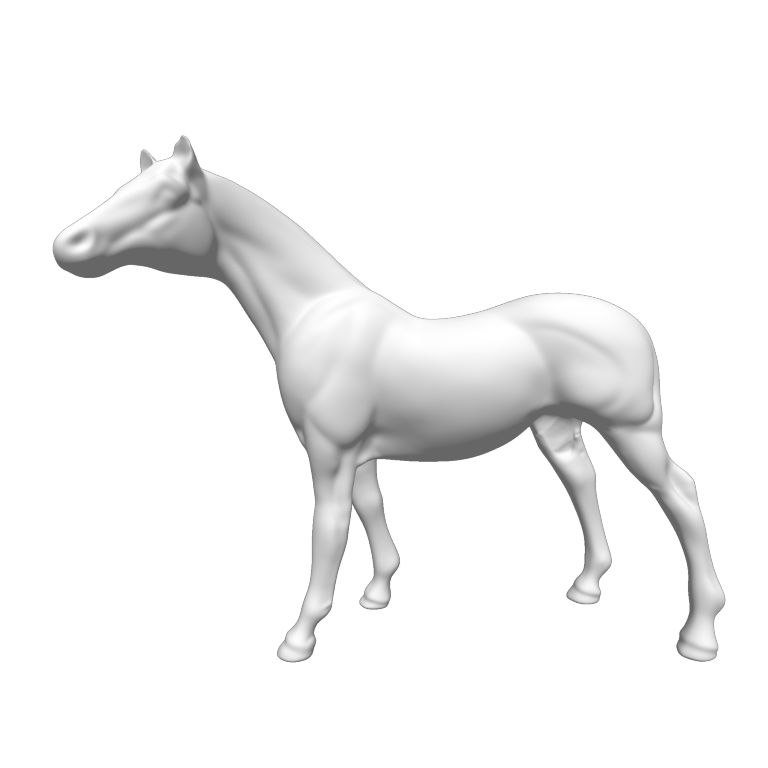}
\includegraphics[width=.48\linewidth, trim=50 80 40 100, clip]{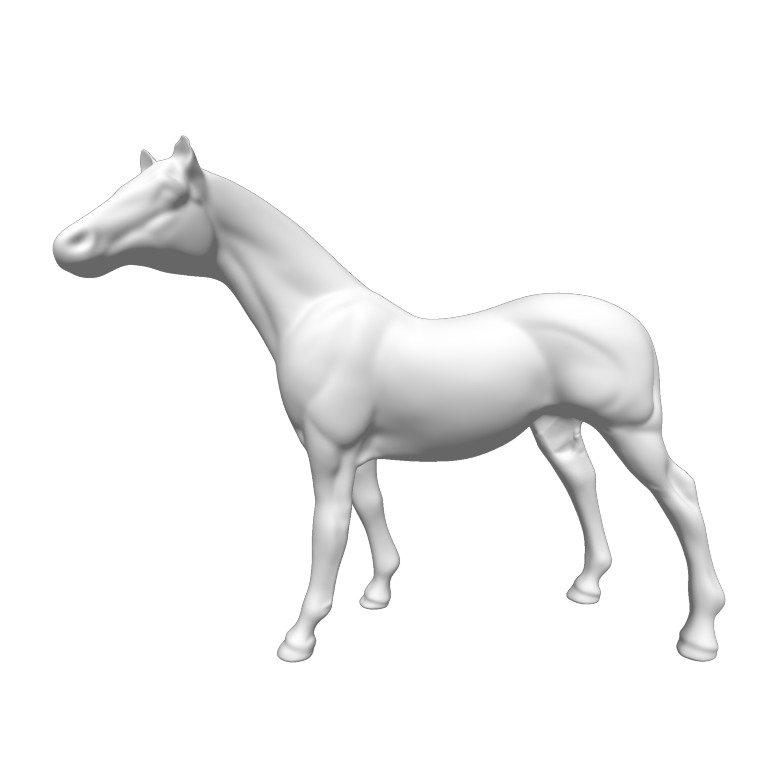}

\vspace{-2em} 
\includegraphics[width=.48\linewidth, trim=50 40 50 40, clip, angle=90, origin=c]{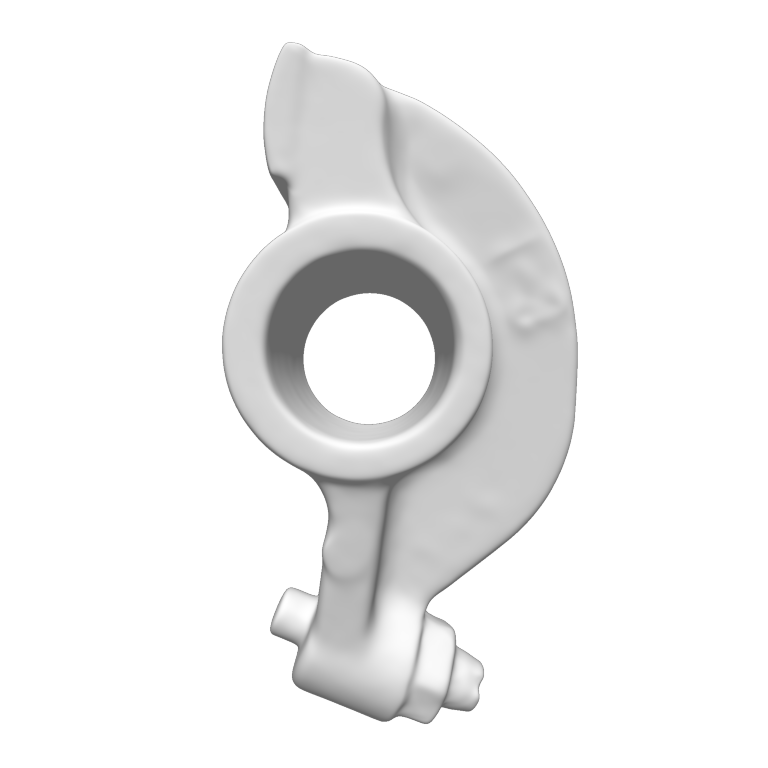}
\includegraphics[width=.48\linewidth, trim=50 40 40 40, clip,  angle=90, origin=c]{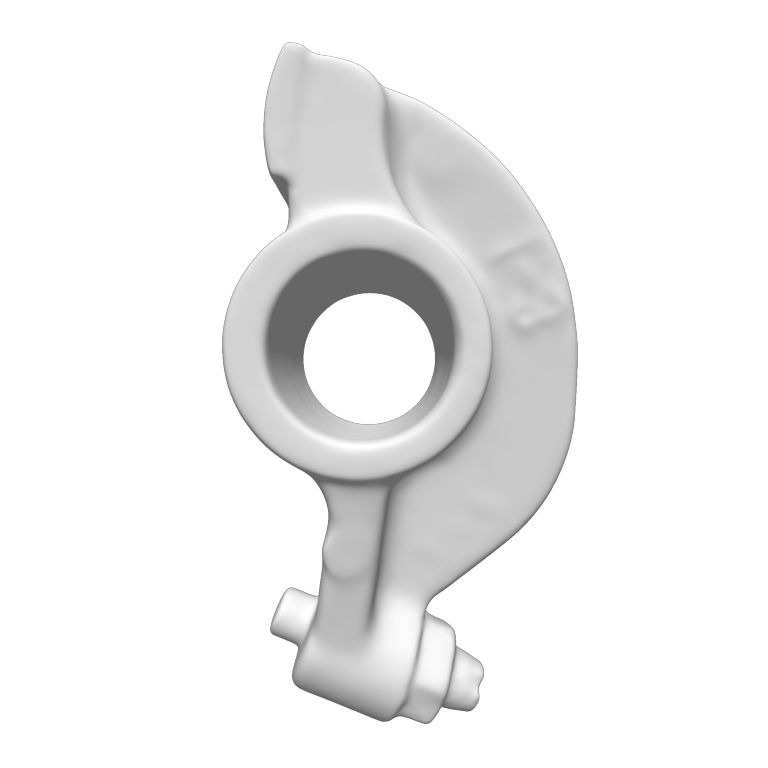}
\vspace{-2em} 

\includegraphics[width=.48\linewidth, trim=50 40 40 100, clip]{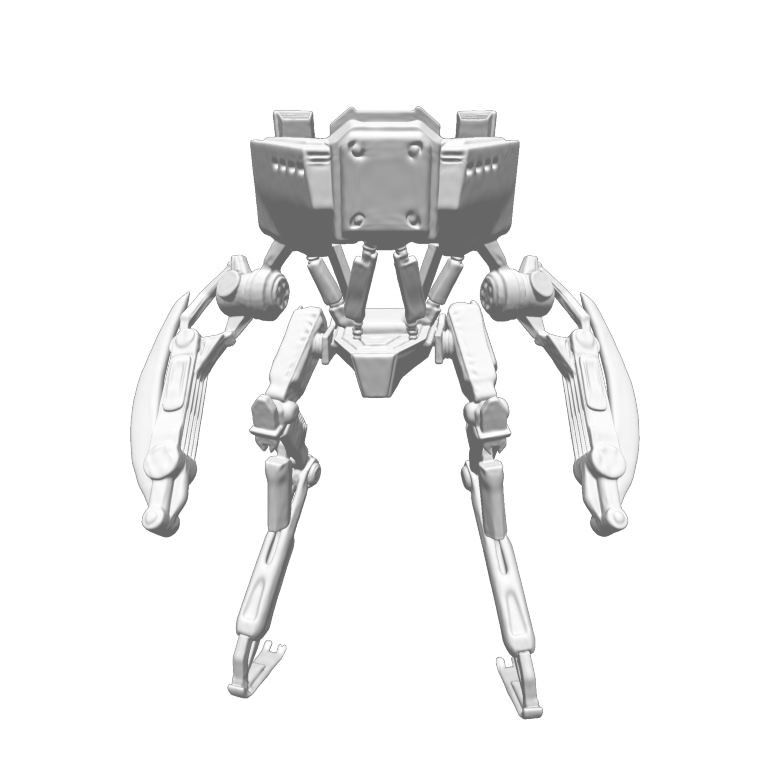}
\includegraphics[width=.48\linewidth, trim=50 40 40 100, clip]{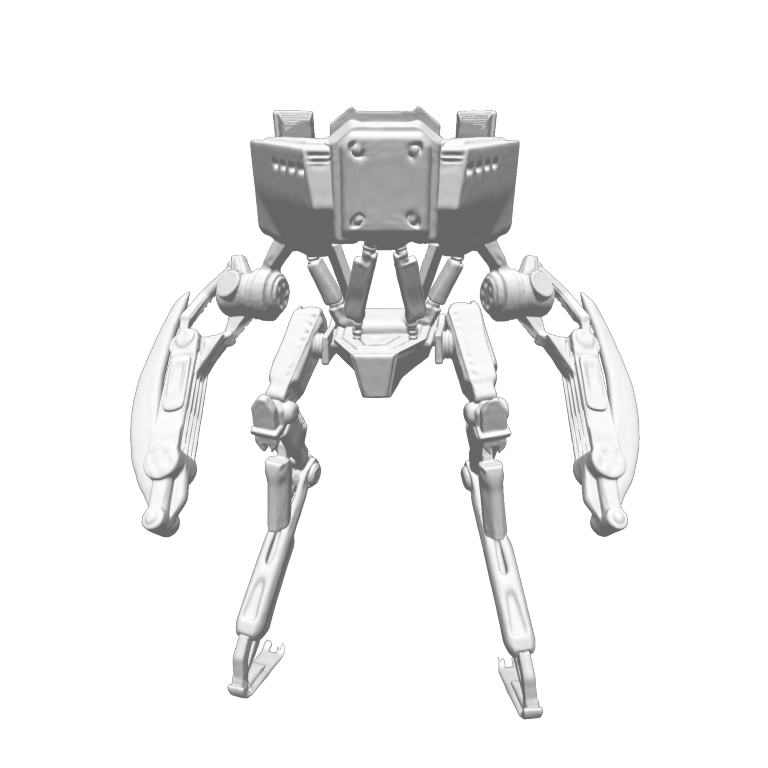}
\end{center}
\begin{center}
\vspace{-1em}
\resizebox{\linewidth}{!}{
\begin{tabular}{r ccc ccc}
\toprule
\multicolumn{1}{c}{\multirow{2}{*}[-2pt]{Scene}}          & \multicolumn{3}{c}{Naive Sphere Tracing} & \multicolumn{3}{c}{Medial Sphere Tracing} \\
           \cmidrule(r){2-4}
           \cmidrule(r){5-7}
     & Mean    & Min    & Max          & Mean    & Min    & Max          \\ 
\midrule
armadillo  & 7.1     & 6.5        & 8.2              & 4.7     & 4.3        & 5.1              \\
bunny      & 7.4     & 6.7        & 8.3              & 4.5     & 4.2        & 4.8              \\
horse      & 6.5     & 6.0        & 7.3              & 4.1     & 3.8        & 4.5              \\
lucy       & 6.1     & 5.6        & 6.9              & 4.1     & 3.8        & 4.5              \\
mecha      & 6.9     & 6.3        & 7.8              & 4.8     & 4.4        & 5.2              \\
rocker-arm & 6.0     & 5.6        & 6.5              & 3.7     & 3.5        & 4.0           \\
\bottomrule
\end{tabular}
} 
\end{center}
\vspace{-1em}
\caption{
\textbf{Rendering implicits} --
Three of the six scenes used for our evaluation rendered by (top-left) naive tracing and (top-right) medial tracing; note the rendered images are visually indistinguishable.
(bottom) Our quantitative analysis reporting statistics about the average number of tracing iterations when rendering a frame for ``naive'' (left) vs. ``medial'' (right) sphere tracing.
}
\label{fig:st_performance}
\end{figure}
\paragraph{Evaluation}
To evaluate the efficacy of our proposed improvement to the sphere tracing algorithm, we perform experiments measuring the impact on the number of iterations taken for rays to converge to the surface.
For this purpose, we train our model on a set of 3D shapes and render each from a random distribution of camera poses.
As shown in \Figure{st_performance}, we find that the medial sphere tracing algorithm yields significantly better worst-case performance compared to the naive algorithm for both iteration counts and render times.
We also analyze the distribution of iterations required to reach convergence for each ray and plot the resulting histograms in \Figure{spheretracing} and \Figure{medialtracing}.
The histogram for the modified algorithm shows a much higher fraction of rays which converge in fewer than 6 iterations, as well as a much smaller tail of rays which take many iterations to converge.
Notice that the quality of the traced image is visually equivalent to the one produced by classical SDF; see~\Figure{st_performance}.

\begin{figure}[t]
\begin{center}
\includegraphics[width=.44\linewidth, trim = 40 0 40 0, clip]{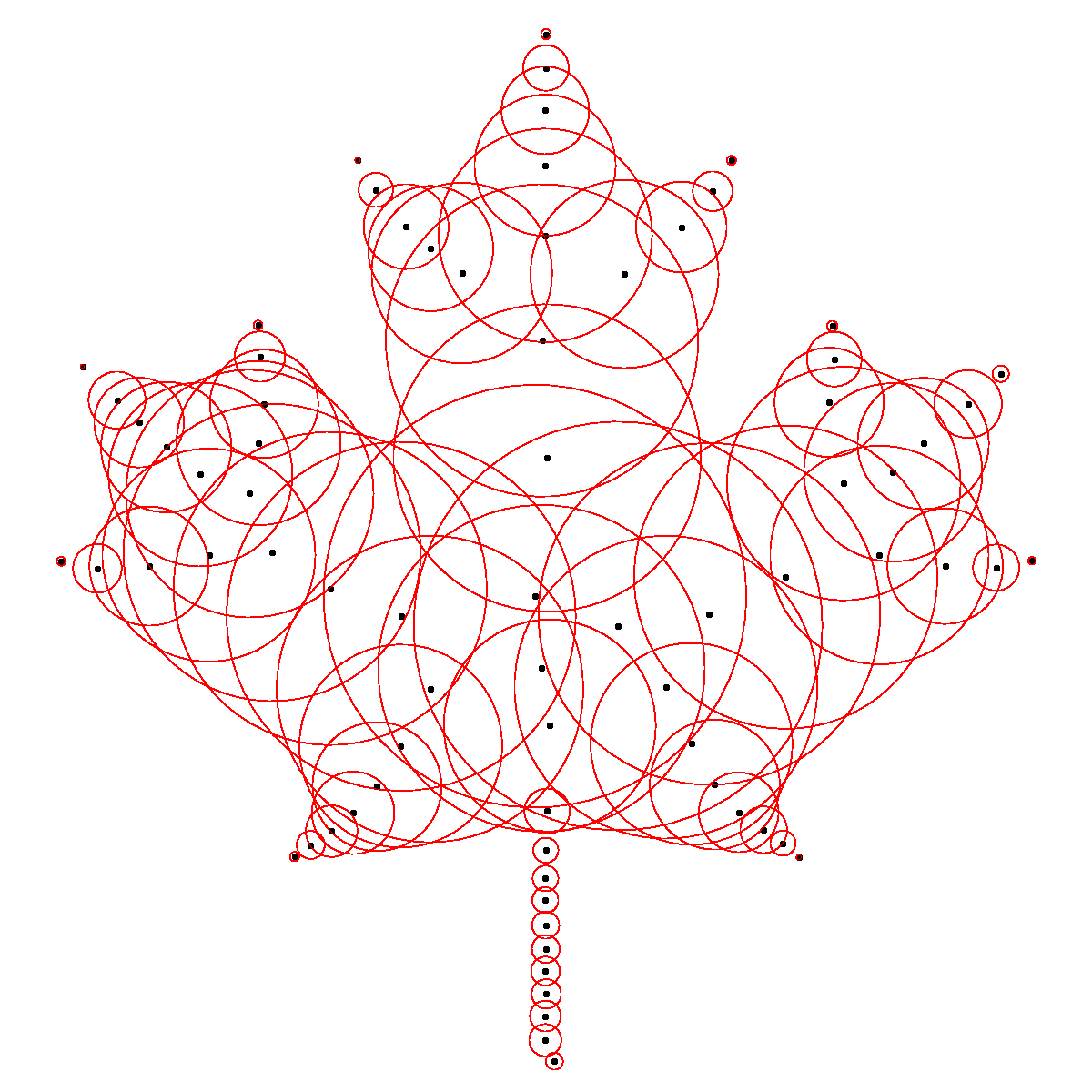}
\hfill
\includegraphics[width=.44\linewidth, trim = 40 0 40 0, clip]{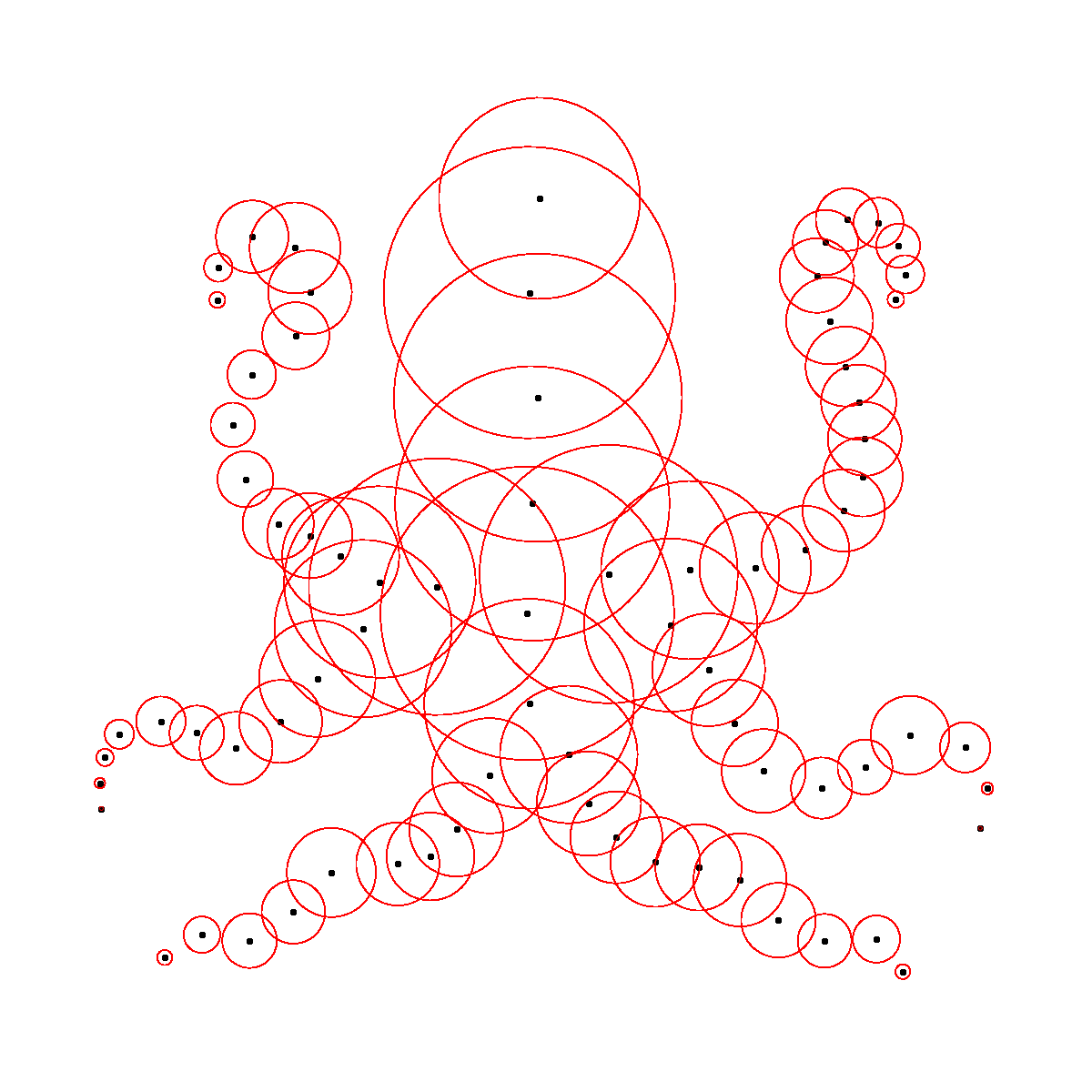}

\vspace{-0.5em}
\includegraphics[width=.47\linewidth, trim=0 80 0 100, clip]{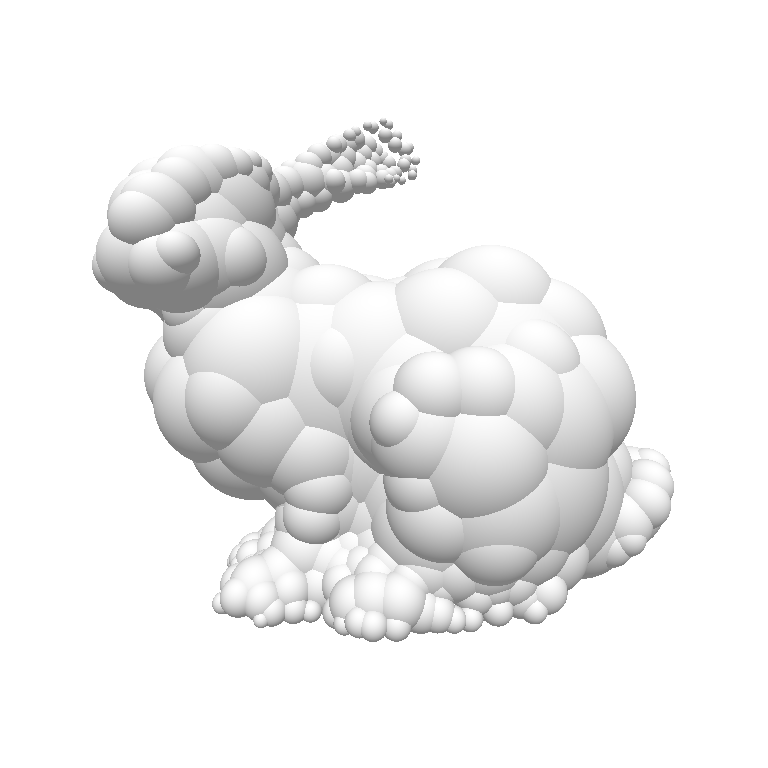}
\includegraphics[width=.49\linewidth, trim=50 80 20 100, clip]{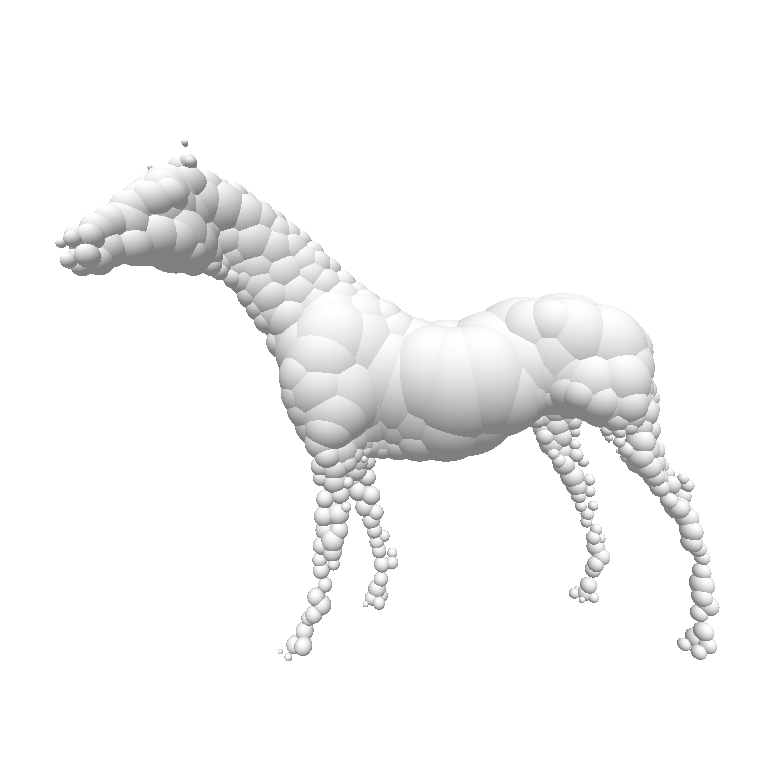}

\includegraphics[height=.4\linewidth, width=\linewidth]{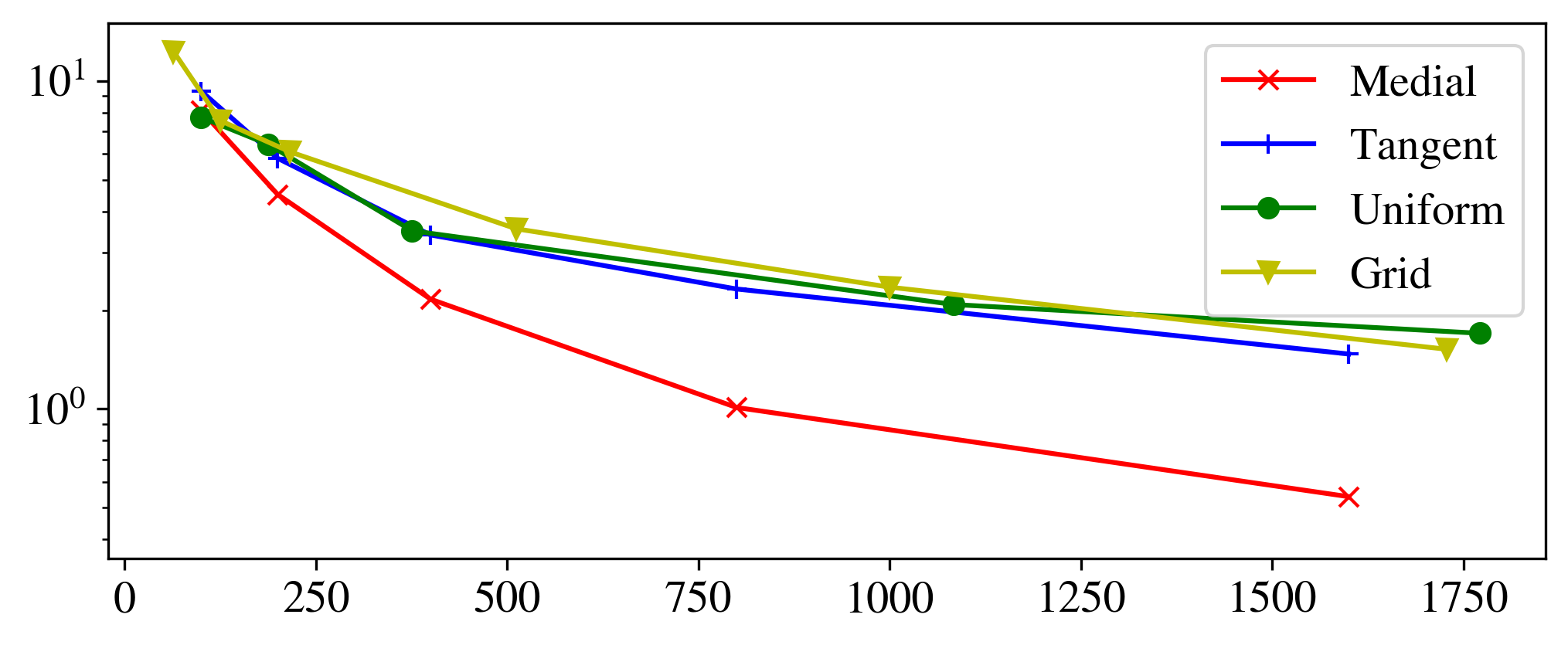}
\end{center}
\vspace{-1em}
\caption{
\textbf{Physics collision proxies} --
Spherical (circular) shape approximation computed by the medial field in 2D~(top) and in 3D~(middle).
We also quantitatively analyze a variety of collision proxies, revealing the corresponding memory/accuracy trade-off in percentage MAE vs memory in \# of floats (bottom).
}
\label{fig:physics}
\end{figure}

\subsection{Physics proxy generation}
Collision detection, and the computation of the corresponding collision response lie at the foundation of real-time physics simulation.
In a modern simulation package such as NVIDIA PhysX~\cite{macklin2014unified,macklin2020local}, we find that a set of spheres is used as a compact approximation of geometry for collision detection.
At the same time, storing the gradient of the SDF at the sphere origin location provides the necessary meta-information to compute the collision response vector.
Representing all solid objects within a scene as collection of spheres (i.e.~particles) is convenient, as the complexity of collision response source code increases \textit{quadratically} in the number of physical proxies that are used.
While \citeN{macklin2014unified} employs a \textit{uniform} grid of spheres to approximate the geometry, we reckon this might not be optimal (e.g. to represent the geometry of a simple spherical object with precision $\varepsilon$ we would still need $O(\varepsilon^{-d})$ primitives.
The fundamental question we ask is ``\textit{How can we approximate an object with fewer, larger spheres, rather than many smaller ones?}''
Towards this objective, we exploit the medial axis via its ``maximally inscribed spheres'' interpretation, together with the fact that we store their properties implicitly within our medial field.

\paragraph{Furthest sphere sampling}
To create physics proxies we propose an algorithm that could be understood as a generalization of the furthest point sampling~\cite{fps}, but for spherical data.
We start by placing $N$ points randomly throughout the volume and for each of these points finding a corresponding point on the medial axis, in constant time using the medial projection operation in \eq{medialproj}.
We then draw $M {<} N$ sample points $x_n^*$ through an iterative sampling:
\begin{equation}
x_n^* = \argmax_{x_n} \min_{x_m} \underbrace{{||x_n - x_m||}/{(r_n + r_m + \epsilon})}_\text{normalized separation}
\;.
\label{eq:fss}
\end{equation}
Here, $\epsilon$ prevents the selection of small, geometrically irrelevant spheres, and the normalization provides \textit{scale invariance}: with $\epsilon=0$, note that two spheres of the same size that touch tangentially have a normalized separation equal to one, and this remains the case as we double their size.
This algorithm results in a set of spheres that greedily minimise overlap with each other, and by virtue of them being medial spheres, are locally maximal, and represent as much volume as possible.

\paragraph{Evaluation}
We evaluate the effectiveness of a number of approximations by comparing the representation accuracy (i.e.~surface MAE) vs. memory consumption (\# floats).
We construct spherical proxies for a set of 3D shapes using \CIRCLE{1} medial spheres, \CIRCLE{2} tangent spheres, \CIRCLE{3} uniform spheres, as well as an SDF discretization~(as a collision detection event can be computed by just checking whether $\SDF(x){<}0$ for a query point $x$).
The tangent spheres are sampled uniformly, and use radii provided by the SDF.
The uniform spheres are sampled on a regular grid, and use a radius bounded by the width of the grid cells.
The SDF grid also uses a regular grid, but represents the surface by interpolating the SDF values at the grid corners.
We find that for a fixed memory budget, the medial spheres computed from the medial field provide the most accurate surface representation; see \Figure{physics}~(bottom).

\begin{figure}[t]
\begin{center}
\includegraphics[height=.6\linewidth, trim=100 40 90 100, clip]{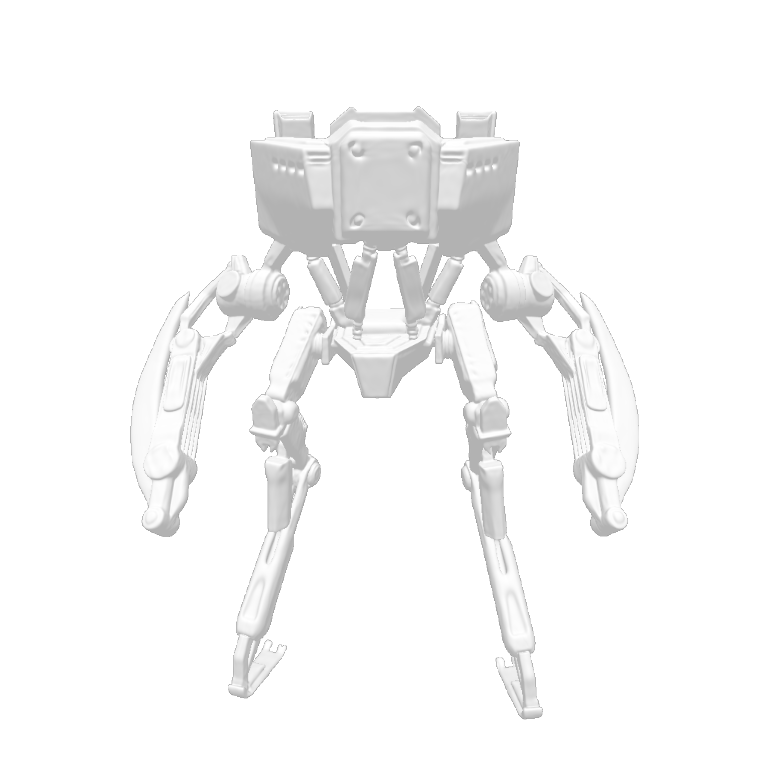}
\includegraphics[height=.6\linewidth, trim=180 50 180 40, clip]{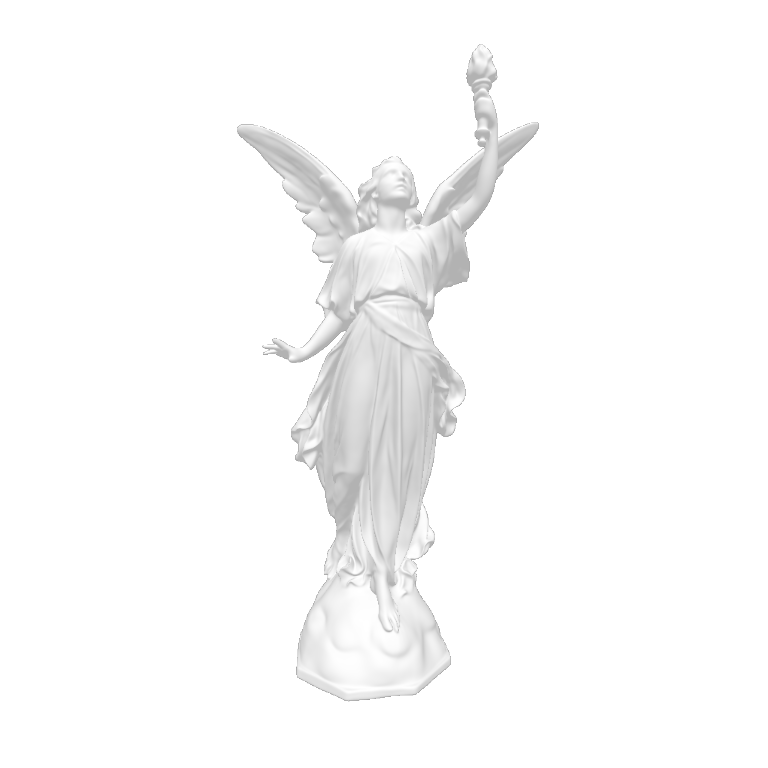}
\includegraphics[height=.6\linewidth, trim=100 40 90 100, clip]{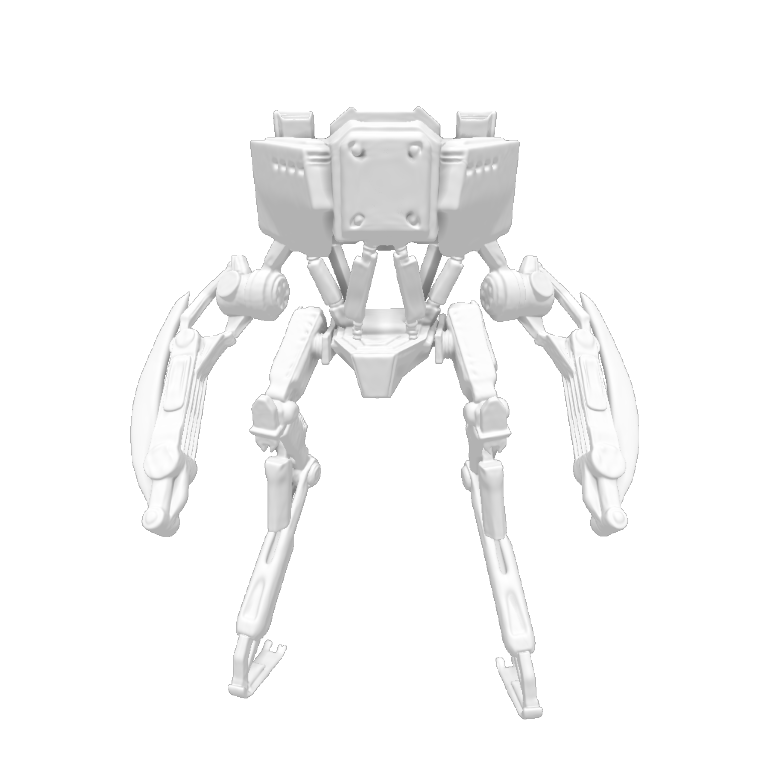}
\includegraphics[height=.6\linewidth, trim=180 50 180 40, clip]{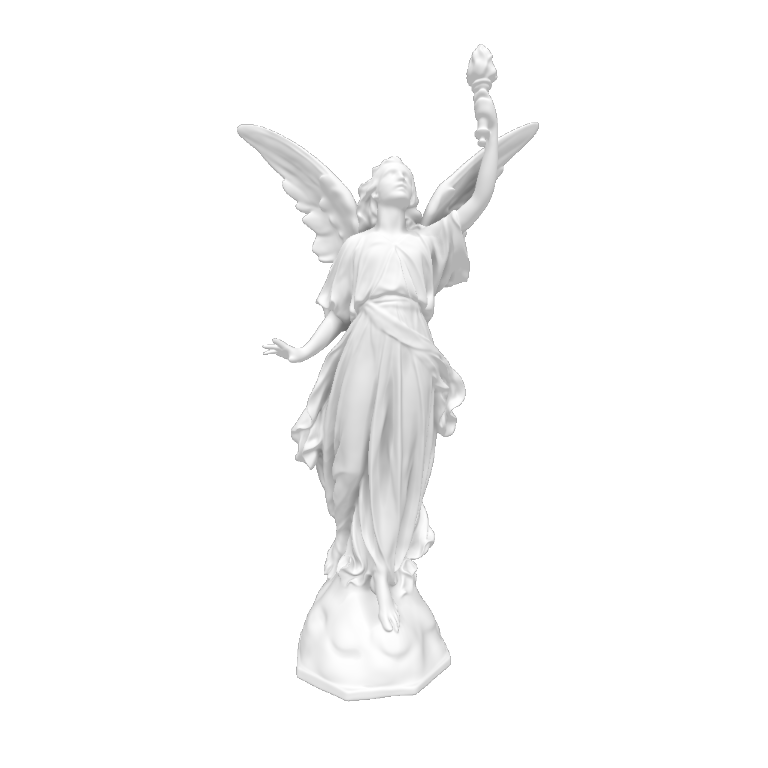}
\includegraphics[height=.6\linewidth, trim=100 40 90 100, clip]{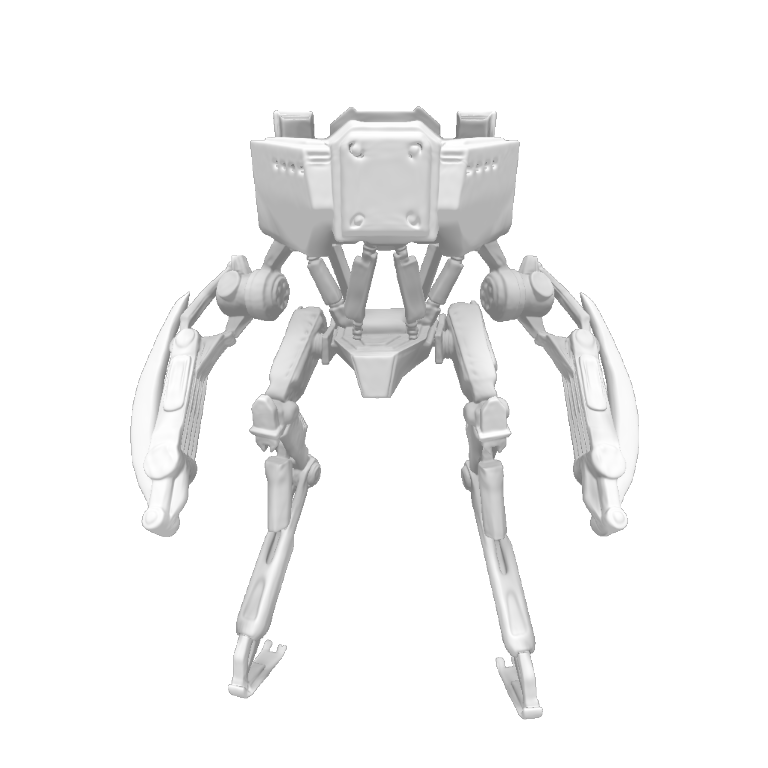}
\includegraphics[height=.6\linewidth, trim=180 50 180 40, clip]{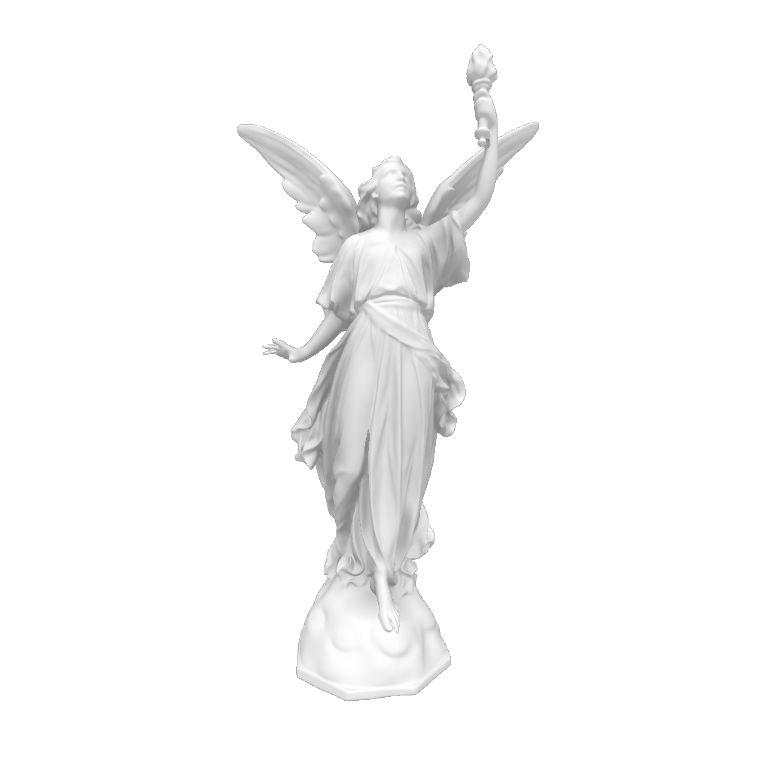}
\end{center}
\vspace{-1em}
\caption{
\textbf{Ambient occlusion} --
Given a smooth-shaded model (top), we visualize classical ``screen space ambient occlusion'' (middle) as well as ``local medial field ambient occlusion'' (bottom).
In comparison to SSAO, MFAO captures occlusion at a range defined by the parameters, without dependence on the viewpoint, and without requiring neither sampling nor filtering.
Examples rendered with MFAO constants $p=0.2$~and~$a=1.5$.
}
\label{fig:ao}
\end{figure}
\subsection{Ambient occlusion}
Ambient occlusion is used to increase realism in rendering by \textit{modulating} ambient lighting by the fraction of views into the surrounding environment that are occluded by local geometry~\cite{ambientocclusion}.
Computing this value correctly requires integrating over all rays leaving a point on the surface, which can have significant cost for even moderately complicated geometry.
Therefore, it is often approximated by a variety of methods which provide visually similar results at \textit{substantially} lower computational cost~\cite{ssao}.

\paragraph{Screen-space ambient occlusion~(SSAO)}
One popular example of efficient ambient occlusion approximation is \textit{screen-space ambient occlusion}, which computes it using the \textit{fraction} of depth values in the region of the screen surrounding a point that are smaller than the query point's depth~\cite{ssao}.
Note SSAO is a deferred rendering technique, as it requires the depth map to be rasterized first, and as it is based on random sampling it needs a secondary de-noising phase to remove noise.
Further, as its outcome is view-dependent, the visual appearance of SSAO is not necessarily stable with respect to viewpoint variations.

\paragraph{Medial field ambient occlusion~(MFAO)}
We introduce \textit{medial field ambient occlusion} as an efficient alternative to approximate ambient occlusion whenever an O(1) medial field is readily available.
The medial field provides information about ``local shape thickness'' of both the shape, and more importantly for this application, its \textit{complement space} $\Shape^+$.
We employ the medial field to measure the local thickness of the shape complement, and derive from this value a proxy for local occlusion at a surface point $x$:
\begin{align}
MFAO(x) = \min(a \MField(x + \nabla \SDF(x) \epsilon)^p, 1)
\;,
\label{eq:mfao_def}
\end{align}
where $p$ and $a$ are parameters that control the strength of the effect, and $\epsilon$ is a small offset to ensure that the sampled point is in the shape complement $\Shape^+$.
Similarly to SSAO, this method is \textit{local}, in that it does not consider the effect of distant geometry, but behaves as expected in providing modulation of the ambient lighting in \textit{concave areas} of the surface.

\paragraph{Analysis}
We qualitatively compare MFAO to SSAO in \Figure{ao}.
Because both methods aim only to emulate the \textit{appearance} of real ambient occlusion, and make no effort to approximate it in a mathematical sense, there is no meaningful way to perform a quantitative comparison.
Nonetheless, MFAO has some immediate advantages when compared to SSAO:
\CIRCLE{1} as it does not require random sampling in its computation, it also does not requires a smoothing post-processing step to remove noise;
\CIRCLE{2} unlike SSAO, which is by its nature view-dependent, it depends only on the medial field value evaluated near the surface, and is therefore stable and consistent as the viewpoint changes.
\begin{figure}
\begin{center}
\includegraphics[width=\linewidth]{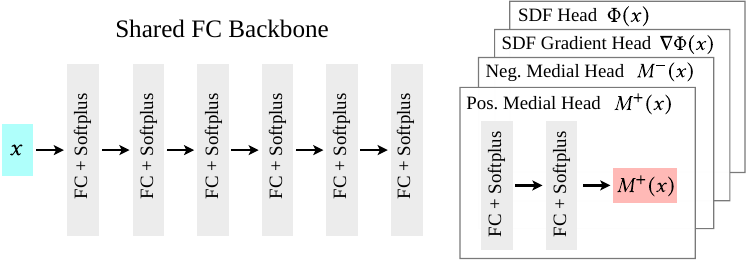}
\end{center}
\vspace{-1em}
\caption{
\textbf{Architecture} -- 
A block diagram of our network architecture. A shared fully connected backbone (left) embeds a world coordinate into a learned feature, which is then decoded into $\MField^+(x)$, $\SDF(x)$, $\MField^-(x)$ and $\nabla\SDF(x)$ by four separate, fully connected heads.
%
}
\label{fig:architecture}
\end{figure}

\section{Implementation}
\label{sec:implementation}

The typical structure for a neural SDF is, as in~\cite{chen2019learning,park2019deepsdf}, a multi-layer perceptron (MLP) network operating on the coordinates of an input point (or an encoded version of it).
Because our constraint formulation for $\MField(x)$ requires a way to compute $\SDF(x)$ and $\nabla \SDF(x)$, we opt to extend a neural SDF architecture to also predict $\MField(x)$ in addition to the typical $\SDF(x)$.
We then have the option to compute $\nabla \SDF(x)$ by back-propagation, or to additionally predict it as an output.

\paragraph{Multi-headed architecture}
We employ a multi-headed network architecture with a shared backbone; see~\Figure{architecture}.
Specifically, we opt for a single MLP to model all values that we wish to regress, which shares a common computation path to build a deep feature that is translated by a dedicated MLP head to either the medial field $\MField(x)$, the SDF $\SDF(x)$, or the gradient of the SDF $\nabla \SDF(x)$.

\paragraph{Modeling a discontinuous field} 
The medial field has a discontinuity at the surface $\Boundary$, which cannot be approximated by ordinary MLPs because they are Lipschitz in the inputs, and learning with gradient-based methods is biased towards small Lipschitz constants~\cite{Bartlett2017SpectrallynormalizedMB,rahaman2019spectral}.
To resolve this issue, we define two components of the medial field: $\MField^-(x)$ and $\MField^+(x)$, which are defined for $x \in \Shape^-$ and $x \in \Shape^+$ respectively.
We predict the values of these fields separately with different heads, and combine them to produce $\MField(x)$:
\begin{align}
    \MField(x) = \begin{cases} 
      \MField^+(x) & \SDF(x) > 0 \\
      \MField^-(x) & \SDF(x) < 0
   \end{cases}
   \;.
\end{align}
This makes it possible to model the discontinuity on $\Boundary$ even though the outputs of all heads are Lipschitz in the input coordinates.

\begin{table}
\begin{center}
\begin{tabular}{r|cc}
\toprule
\textbf{2D}       & SDF & DMF \\ 
\midrule
giraffe   & \bf{0.018}          & 0.032          \\
koch      & \bf{0.051}          & 0.055           \\
m         & \bf{0.017}          & 0.024           \\
maple     & 0.016          & \bf{0.014}           \\
octopus   & \bf{0.020}          & 0.024           \\
statue    & \bf{0.017}          & 0.020          \\
\bottomrule
\end{tabular}
%
\hspace{1em}
%
\begin{tabular}{r|cc}
\toprule
\textbf{3D}       & SDF & DMF \\ 
\midrule
armadillo  & 0.045          & \bf{0.043}          \\
bunny      & 0.040          & \bf{0.032}          \\
horse      & \bf{0.022}          & 0.023          \\
lucy       & \bf{0.048}          & 0.051          \\
mecha      & \bf{0.071}          & 0.072          \\
rocker-arm & \bf{0.028}          & 0.035         \\
\bottomrule
\end{tabular}
\end{center}
\caption{
\textbf{Interference analysis} --
We analyze whether adding the losses needed to train DMF to classical SDF training degrades the surface approximation quality.
Values are mean absolute error expressed as a percentage of the bounding box diagonal.
}
\label{tab:surface_quality}
\end{table}

\subsection{Interference analysis -- \Table{surface_quality}}
To evaluate the effect of adding our medial field-specific losses and/or network components, we perform experiments for a collection of 2D and 3D shapes that measure the accuracy of the resulting surface representations.
Specifically, for each shape we train our DMF model, as well as an equivalent neural SDF model with the medial field-specific losses omitted.
We find that the addition of the medial field-related elements does not substantially affect the reconstruction quality of the network.
In fact, the differences in quality are substantially below the inter-shape variation and might be entirely explained by the stochastic nature of training.

\subsection{Training details}
We adopt a similar neural SDF architecture to~\cite{gropp2020implicit}.
The core of the network is an MLP, operating on the encoded position, with 6 hidden layers using the geometric initialization and activation described in \cite{gropp2020implicit}.
We then transform the resulting feature representation into $\SDF(x)\in\Real^1$, $\MField^+(x)\in\Real^1$, $\MField^-(x)\in\Real^1$, and $\nabla\SDF(x)\in\Real^3$ via dedicated two-layer MLPs (heads), each consisting of two hidden layers with 64 neurons and the same activations as the core, with a final output linear layer.
We differ from \cite{gropp2020implicit} in that we take as input coordinates transformed into random Fourier features~\cite{tancik2020fourier} composed of 64 bands, in addition to the un-transformed coordinates.
To make this compatible with the geometric initialization, we weight the Fourier features using $w_i = \alpha ||f_i||$, where $\alpha$ is a small constant that we set to $10^{-3}$, and $||f_i||$ is the frequency of the $i$-th band, so as to preserve the network's bias towards a spherical topology.
To provide supervision for the SDF predictions we use the surface and normal reconstruction loss:
\begin{align}
    \Loss{surface} &= \IE_{x\sim\Boundary}\left[ \SDF(x)^2 \right]\;, \\
    \Loss{normal} &= \IE_{x\sim\Boundary}\left[ (\nabla \SDF(x) - \nabla \SDF_\textrm{GT}(x))^2 \right]\;,
\end{align}
Where $x\sim\Boundary$ are samples drawn uniformly from the ground-truth surface.
To supervise the behaviour of the SDF in the volume we use the Eikonal loss~\cite{gropp2020implicit}:
\begin{align}
    \Loss{Eikonal} = \IE_{x\sim\Real^d}\left[ (||\nabla \SDF(x)|| - 1)^2 \right]
\;,
\end{align}
Additionally, we employ two regularizers to improve the quality of the predicted surface and prevent the inclusion of Fourier features from interfering with the implicit SDF training:
\begin{align}
    \Loss{minsurface} &= \IE_{x\sim\Real^d}\left[ e^{-100 |\SDF(x)|} \right]\;, \\
    \Loss{curvature} &= \IE_{x\sim\Real^d}\left[ \left\|\frac{\partial}{\partial t} \nabla \SDF(x + t \nabla \SDF(x)) \right\|_1 \right]
\;.
\end{align}
To speed up the execution of medial sphere tracing, which requires the gradient $\nabla \SDF(x)$ at each query point, we predict the value directly with one of the network heads and constrain it to match the analytic value found through back-propagation:
\begin{align}
    \Loss{gradient} = \IE_{x\sim\Real^d}\left[ ||\nabla \SDF(x) - \nabla_\textrm{analytic} \SDF(x)||^2 \right]
\;,
\end{align}

\begin{table}[t]
\begin{center}
\resizebox{\linewidth}{!}{
\setlength{\tabcolsep}{2pt}
\begin{tabular}{ccccccccc}
\toprule
$\Loss{surface}$ & $\Loss{normal}$ & $\Loss{maximal}$ & $\Loss{inscribed}$ & $\Loss{orthogonal}$ & $\Loss{Eikonal}$ & $\Loss{minsurface}$ & $\Loss{curvature}$ & $\Loss{gradient}$  \\
\midrule
$10^4$ & $10$ & $10^2$ & $5\times10^2$ & $3\times10^{-2}$ & $1$ & $1$ & $10^{-(1+4t)}$ & $1$ \\
\bottomrule
\end{tabular}
}
\end{center}
\caption{\textbf{Hyperparameter setup} -- Determined empirically to balance the range of terms. We also schedule the curvature loss term during training (denoted here with $t$, a value varying linearly from 0 to 1 over the course of training) in order to strongly regularize at the beginning to avoid local optimum, and allow a closer fit to the surface near the end.}
\label{tab:hyperparam}
\end{table}
\paragraph{Training setup}
For all networks we train and evaluate we use the same training setup.
We use the Adam optimizer~\cite{Kingma15} with the default parameters and a batch size of $2^{13}$.
Our complete loss function is a weighted summation of all losses, with empirically determined hyperparameters that provide smooth optimization for all loss terms in \Table{hyperparam}.
As some of our losses depend on point on surfaces $x{\sim}\Boundary$ and others on points in the rendering volume $x{\sim}\Real^d$ we need different sampling strategies to account for this.
To form batches of surface points we sample uniformly over the input surface mesh, using the triangle normals to provide~$\nabla \SDF_\textrm{GT}(x)$.
To form batches of points from $\Real^d$ we augment the surface samples with an offset from an isotropic Gaussian distribution with $\frac{1}{2}\sigma$ equal to the bounding box diagonal.
We train our network until full convergence, which takes around 500k--900k iterations for each run, taking on average 6 hours/model.
\section{Conclusions}
We have introduced medial fields, an implicit representation of the local thickness, that expands the capacity of implicit representations for 3D geometry.
We have shown how medial fields could be stored  within the parameters of a deep neural network, not only allowing O(1) access, but providing an effective strategy for its computation.
We have also demonstrated the potential of medial fields by showing their use in a number of applications: we show that it can be used to \CIRCLE{1} improve convergence when sphere tracing, \CIRCLE{2} efficiently provide physics proxies, and \CIRCLE{3} perform ambient occlusion.

\paragraph{Limitations and future works}
While our method is presented as dependent on \textit{signed} distance fields, this is only required to resolve the surface discontinuity in $\MField(x)$, and this requirement could be alleviated with an alternate approach to modelling the discontinuity.
Similarly to sphere tracing, medial sphere tracing can result in rendering artifacts when the underlying field values are not accurate.
This manifests in the case where queried ``empty'' spheres are not actually empty, resulting in the tracing stepping over the surface.
Similarly to sphere tracing, this can be mitigated by assuming that the field values are over-estimated and scaling down the spheres, but doing this excessively this will begin to erode the advantage of using medial spheres.
There have been other proposed modifications to sphere tracing which address pathological cases using heuristics~\cite{balint2018accelerating,korndorfer2014enhanced}, and it could be beneficial to combine these with the additional information from medial spheres to compound the benefit to convergence.
Coarse-to-fine rendering schemes could also potentially see significant improvement from medial spheres, as they do not approach zero radius near the surface, and would therefore delay subdivision of the ray packets.
Computing spherical physics proxies might not be optimal for geometry exhibiting very thin geometry, or be simply unsuitable for non-orientable or non-watertight surfaces~\cite{chibane2020ndf}.
Further, while our physics proxy algorithm is a heuristic, a venue for future work is the investigation of optimization-based techniques that are capable of building proxies with guarantees on the achieved approximation power~(e.g. bounded Hausdorff error), or optimal placement for a fixed cardinality of proxies.

\begin{acks}
The authors would like to thank Brian Wyvill, Frank Dellaert, and Ryan Schmidt for their helpful comments.
This work was supported by the Natural Sciences and Engineering Research Council of Canada (NSERC) Discovery Grant, NSERC Collaborative Research and Development Grant, Google, Compute Canada, and Advanced Research Computing at the University of British Columbia.
\end{acks}
\appendix
\section{Proof}
\label{app:proof}
In this section we prove the following (recall that $\USDF(x)=|\SDF(x)|$):
\begin{prop}
   Let $\MField^*$ be a function on $\Real^d \setminus \Boundary$ such that:
   \begin{align*}
     \MField^*(x) \geq \USDF(x)
     \quad &\forall x \in \Real^d \setminus \Boundary \;, \\
     \MField^*(x) = \USDF\left(x + (\MField^*(x)-\USDF(x)) \nabla \USDF(x)\right)
     \quad &\forall x \in \Real^d \setminus \Boundary \;, \\
     \nabla \MField^*(x) \cdot \nabla \SDF(x) = 0 
     \quad &\forall x \in \Real^d \setminus (\Boundary \cup \MAxis) \;,
  \end{align*}
  Then $\MField^*(x) = \MField(x)$ where $\MField$ is the
  median field of $\Shape$.
\end{prop}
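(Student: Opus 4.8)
The plan is to prove uniqueness by a squeeze: constraints \eq{maximality_constraint} and \eq{inscription_constraint} will pin $\MField^*$ from above by $\MField$, while \eq{orthogonality_constraint} together with \eq{maximality_constraint} will pin it from below by $\MField$, so that $\MField^*=\MField$ at every $x\in\Real^d\setminus\Boundary$.

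First I would record the elementary one-dimensional geometry of a single medial spoke. Fix $x\in\Real^d\setminus\Boundary$, write $p=\Project{\Shape}(x)$ and $n=\nabla\USDF(x)$, so that $x=p+\USDF(x)\,n$ and $\nabla\SDF(x)=\pm n$. Along the ray $r\mapsto p+r\,n$ the nearest boundary point stays equal to $p$ — hence $\nabla\USDF\equiv n$ and $\USDF(p+r\,n)=r$ — for exactly $0\le r\le \MField(x)$, while $\USDF(p+r\,n)<r$ for $r>\MField(x)$; that the admissible set of $r$ is an interval follows from $\USDF$ being $1$-Lipschitz, and its right endpoint is $\MField(x)=\USDF(\Project{\MAxis}(x))$ by the definition of the medial spoke. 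Consequently $\MField(x)$ is the largest radius of an empty sphere tangent to $\Boundary$ at $p$ with center on the normal line through $p$, and $\MField$ is \emph{constant} along the whole open spoke $\{p+r\,n:0<r<\MField(x)\}$, since every point of it has closest point $p$ and the same medial sphere. For the upper bound: by \eq{maximality_constraint}, $\MField^*(x)\ge\USDF(x)\ge 0$, so $y:=x+(\MField^*(x)-\USDF(x))\,n=p+\MField^*(x)\,n$ lies on that forward ray; constraint \eq{inscription_constraint} reads $\USDF(y)=\MField^*(x)$, i.e. $\USDF(p+\MField^*(x)\,n)=\MField^*(x)$, which by the previous sentence forces $\MField^*(x)\le\MField(x)$.

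For the lower bound I would integrate \eq{orthogonality_constraint} along the spoke. Since $\nabla\SDF=\pm n$ is a nonzero vector tangent to the spoke, \eq{orthogonality_constraint} says $\frac{d}{dr}\MField^*(p+r\,n)=\nabla\MField^*(p+r\,n)\cdot(\pm n)=0$ for $0<r<\MField(x)$, so $\MField^*$ equals a single constant $c$ on the open spoke. Evaluating \eq{maximality_constraint} at $p+r\,n$ gives $c\ge\USDF(p+r\,n)=r$ for every $r<\MField(x)$, hence $c\ge\MField(x)$; combined with the upper bound this gives $\MField^*(x)=c=\MField(x)$. Since every $x\in\Real^d\setminus\Boundary$ lies on its own spoke — either $0<\USDF(x)<\MField(x)$, or $x\in\MAxis$, in which case $\MField(x)=\USDF(x)$ and the upper bound with \eq{maximality_constraint} already force equality — this yields $\MField^*\equiv\MField$ on $\Real^d\setminus\Boundary$, as claimed.

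The step I expect to be delicate is the regularity bookkeeping behind ``integrate \eq{orthogonality_constraint} along the spoke'': passing from the pointwise identity $\nabla\MField^*\cdot\nabla\SDF=0$ to $\MField^*$ being constant on the open segment presumes $\MField^*$ is smooth enough (e.g.\ $C^1$ on $\Real^d\setminus(\Boundary\cup\MAxis)$) to apply the fundamental theorem of calculus along it, and one must also verify that the open spoke stays inside the smooth region $\Real^d\setminus(\Boundary\cup\MAxis)$ up to — but excluding — $\Project{\MAxis}(x)$, and dispose of degenerate configurations in which a spoke is unbounded (so $\MField(x)=\infty$). Everything else reduces to the one-dimensional spoke geometry recorded above, so I would state the needed regularity as a hypothesis and keep the rest elementary.
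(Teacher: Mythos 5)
Your proof is correct and takes essentially the same route as the paper's: you show $\MField^*$ is constant along the spoke via the orthogonality constraint and combine it with maximality to get $\MField^*\geq\MField$, and you use the inscription constraint plus the fact that $\USDF(p+r\,n)<r$ beyond the medial point to get $\MField^*\leq\MField$ — the latter fact being exactly what the paper establishes, via the strict inequality in the definition of $\MAxis$, in its concluding proof by contradiction. The regularity caveats you flag (differentiability along the spoke, applying the fundamental theorem of calculus to $\MField^*$) are present, implicitly, in the paper's own argument as well.
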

\begin{proof}
 Let $\Line(x)$ be the spoke, i.e. the line segment connecting $\Project\Shape(x) = x_0$ to $\Project\MAxis(x)$.
 We can parametrize this line by
 \[ \Line(x) = \{ x_0 + t \nabla \USDF(x)\, | \, \text{ for } t \in \left(0, \MField(x)\right)\}\;. \]
 Consider $f(t) = \USDF(x_0 + t \nabla \USDF(x))$. Note that
 since $\USDF(x)$ is differentiable on $\Line(x)$, we can use chain rule to get $f'(t)=\|\nabla \USDF(x)\|^2=1$ for all $t \in \left(0, \MField(x)\right)$. Therefore  $\USDF(x_0 + t \nabla \USDF(x)) = t$ for $t \in (0, \MField(x))$.
 Similarly, let $g(t) = \MField^*(x_0 + t \nabla \USDF(x))$,
 and since we are asssuming $\nabla \MField^*(x) \cdot \nabla \SDF(x)=0$,
 we get $g'(t)=0$, which implies $\MField^*(x)$ is constant
 on $\Line(x)$.
 Recall that we're assuming that $\MField^*(x) \geq \USDF(x)$. Restricting this inequality to $\Line(x)$
 we get
 \[ \MField^*(x) = \MField^*(x_0 + t \nabla \USDF(x)) \geq \USDF(x_0 + t \nabla \USDF(x)) = t\;,\]
 for all $t \in (0, \MField(x))$; therefore we get $\MField^*(x) \geq \MField(x)$.
 Finally, note that $x_0 + \MField(x) \nabla \USDF(x) \in \MAxis$. If $\MField(x)>\MField^*(x)$ then,
 using the assumption that $\MField^*(x) = 
 \USDF(x + (\MField^*(x)-\USDF(x)) \nabla \USDF(x))$:
 \begin{align*}
     \MField^*(x) & = \USDF\left(x + (\MField^*(x)-\USDF(x)) \nabla \USDF(x)\right) \\
     & = \USDF\left(x_0 + \MField^*(x) \nabla \USDF(x)\right) \\
     & = \USDF\left(x_0 + \MField(x) \nabla \USDF(x) + (\MField^*(x) - \MField(x)) \nabla \USDF(x)\right) \\
     & < \USDF\left(x_0 + \MField(x) \nabla \USDF(x)\right) + \| (\MField^*(x) - \MField(x))\nabla \USDF(X)\| \\
     & = \MField^*(x)\;,
    \end{align*}
which is a contradiction. Therefore $\MField(x) = \MField^*(x)$ as desired.
\end{proof}

\bibliographystyle{ACM-Reference-Format}
\bibliography{egbib}


\begin{thebibliography}{49}


\ifx \showCODEN    \undefined \def \showCODEN     #1{\unskip}     \fi
\ifx \showDOI      \undefined \def \showDOI       #1{#1}\fi
\ifx \showISBNx    \undefined \def \showISBNx     #1{\unskip}     \fi
\ifx \showISBNxiii \undefined \def \showISBNxiii  #1{\unskip}     \fi
\ifx \showISSN     \undefined \def \showISSN      #1{\unskip}     \fi
\ifx \showLCCN     \undefined \def \showLCCN      #1{\unskip}     \fi
\ifx \shownote     \undefined \def \shownote      #1{#1}          \fi
\ifx \showarticletitle \undefined \def \showarticletitle #1{#1}   \fi
\ifx \showURL      \undefined \def \showURL       {\relax}        \fi
\providecommand\bibfield[2]{#2}
\providecommand\bibinfo[2]{#2}
\providecommand\natexlab[1]{#1}
\providecommand\showeprint[2][]{arXiv:#2}

\bibitem[\protect\citeauthoryear{Angles, Rebain, Macklin, Wyvill, Barthe,
  Lewis, von~der Pahlen, Izadi, Valentin, Bouaziz, and Tagliasacchi}{Angles
  et~al\mbox{.}}{2019}]%
        {viper}
\bibfield{author}{\bibinfo{person}{Baptiste Angles}, \bibinfo{person}{Daniel
  Rebain}, \bibinfo{person}{Miles Macklin}, \bibinfo{person}{Brian Wyvill},
  \bibinfo{person}{Loic Barthe}, \bibinfo{person}{John Lewis},
  \bibinfo{person}{Javier von~der Pahlen}, \bibinfo{person}{Shahram Izadi},
  \bibinfo{person}{Julien Valentin}, \bibinfo{person}{Sofien Bouaziz}, {and}
  \bibinfo{person}{Andrea Tagliasacchi}.} \bibinfo{year}{2019}\natexlab{}.
\newblock \showarticletitle{VIPER: Volume Invariant Position-based Elastic
  Rods}.
\newblock \bibinfo{journal}{\emph{Proc. of SCA}} (\bibinfo{year}{2019}).
\newblock


\bibitem[\protect\citeauthoryear{Atzmon and Lipman}{Atzmon and Lipman}{2020}]%
        {atzmon2019sal}
\bibfield{author}{\bibinfo{person}{Matan Atzmon} {and} \bibinfo{person}{Yaron
  Lipman}.} \bibinfo{year}{2020}\natexlab{}.
\newblock \showarticletitle{Sal: Sign agnostic learning of shapes from raw
  data}.
\newblock \bibinfo{journal}{\emph{Proc. CVPR}} (\bibinfo{year}{2020}).
\newblock


\bibitem[\protect\citeauthoryear{B{\'a}lint and Valasek}{B{\'a}lint and
  Valasek}{2018}]%
        {balint2018accelerating}
\bibfield{author}{\bibinfo{person}{Csaba B{\'a}lint} {and}
  \bibinfo{person}{G{\'a}bor Valasek}.} \bibinfo{year}{2018}\natexlab{}.
\newblock \showarticletitle{Accelerating Sphere Tracing.}
\newblock \bibinfo{journal}{\emph{Eurographics (Short Papers)}}
  (\bibinfo{year}{2018}).
\newblock


\bibitem[\protect\citeauthoryear{Bartlett, Foster, and Telgarsky}{Bartlett
  et~al\mbox{.}}{2017}]%
        {Bartlett2017SpectrallynormalizedMB}
\bibfield{author}{\bibinfo{person}{Peter~L Bartlett}, \bibinfo{person}{Dylan~J
  Foster}, {and} \bibinfo{person}{Matus~J Telgarsky}.}
  \bibinfo{year}{2017}\natexlab{}.
\newblock \showarticletitle{Spectrally-normalized margin bounds for neural
  networks}.
\newblock \bibinfo{journal}{\emph{Proc. NeurIPS}} (\bibinfo{year}{2017}).
\newblock


\bibitem[\protect\citeauthoryear{Bavoil and Sainz}{Bavoil and Sainz}{2008}]%
        {ssao}
\bibfield{author}{\bibinfo{person}{Louis Bavoil} {and} \bibinfo{person}{Miguel
  Sainz}.} \bibinfo{year}{2008}\natexlab{}.
\newblock \showarticletitle{Screen space ambient occlusion}.
\newblock \bibinfo{journal}{\emph{NVIDIA developer information}}
  (\bibinfo{year}{2008}).
\newblock


\bibitem[\protect\citeauthoryear{Blinn}{Blinn}{1982}]%
        {blinn1982generalization}
\bibfield{author}{\bibinfo{person}{James~F Blinn}.}
  \bibinfo{year}{1982}\natexlab{}.
\newblock \showarticletitle{A generalization of algebraic surface drawing}.
\newblock \bibinfo{journal}{\emph{ACM TOG}} (\bibinfo{year}{1982}).
\newblock


\bibitem[\protect\citeauthoryear{Bloomenthal, Bajaj, Blinn, Wyvill, Cani,
  Rockwood, and Wyvill}{Bloomenthal et~al\mbox{.}}{1997}]%
        {bloomenthal1997introduction}
\bibfield{author}{\bibinfo{person}{Jules Bloomenthal},
  \bibinfo{person}{Chandrajit Bajaj}, \bibinfo{person}{Jim Blinn},
  \bibinfo{person}{Brian Wyvill}, \bibinfo{person}{Marie-Paule Cani},
  \bibinfo{person}{Alyn Rockwood}, {and} \bibinfo{person}{Geoff Wyvill}.}
  \bibinfo{year}{1997}\natexlab{}.
\newblock \bibinfo{booktitle}{\emph{Introduction to implicit surfaces}}.
\newblock \bibinfo{publisher}{M.K.}
\newblock


\bibitem[\protect\citeauthoryear{Blum et~al\mbox{.}}{Blum
  et~al\mbox{.}}{1967}]%
        {blum}
\bibfield{author}{\bibinfo{person}{Harry Blum} {et~al\mbox{.}}}
  \bibinfo{year}{1967}\natexlab{}.
\newblock \bibinfo{booktitle}{\emph{A transformation for extracting new
  descriptors of shape}}.
\newblock \bibinfo{publisher}{MIT press Cambridge}.
\newblock


\bibitem[\protect\citeauthoryear{Bradbury, Frostig, Hawkins, Johnson, Leary,
  Maclaurin, Necula, Paszke, Vander{P}las, Wanderman-{M}ilne, and
  Zhang}{Bradbury et~al\mbox{.}}{2018}]%
        {jax}
\bibfield{author}{\bibinfo{person}{James Bradbury}, \bibinfo{person}{Roy
  Frostig}, \bibinfo{person}{Peter Hawkins}, \bibinfo{person}{Matthew~James
  Johnson}, \bibinfo{person}{Chris Leary}, \bibinfo{person}{Dougal Maclaurin},
  \bibinfo{person}{George Necula}, \bibinfo{person}{Adam Paszke},
  \bibinfo{person}{Jake Vander{P}las}, \bibinfo{person}{Skye
  Wanderman-{M}ilne}, {and} \bibinfo{person}{Qiao Zhang}.}
  \bibinfo{year}{2018}\natexlab{}.
\newblock \bibinfo{booktitle}{\emph{{JAX}: composable transformations of
  {P}ython+{N}um{P}y programs}}.
\newblock
\urldef\tempurl%
\url{http://github.com/google/jax}
\showURL{%
\tempurl}


\bibitem[\protect\citeauthoryear{Chabra, Lenssen, Ilg, Schmidt, Straub,
  Lovegrove, and Newcombe}{Chabra et~al\mbox{.}}{2020}]%
        {chabra2020deep}
\bibfield{author}{\bibinfo{person}{Rohan Chabra}, \bibinfo{person}{Jan~Eric
  Lenssen}, \bibinfo{person}{Eddy Ilg}, \bibinfo{person}{Tanner Schmidt},
  \bibinfo{person}{Julian Straub}, \bibinfo{person}{Steven Lovegrove}, {and}
  \bibinfo{person}{Richard Newcombe}.} \bibinfo{year}{2020}\natexlab{}.
\newblock \showarticletitle{Deep Local Shapes: Learning Local SDF Priors for
  Detailed 3D Reconstruction}.
\newblock \bibinfo{journal}{\emph{arXiv preprint arXiv:2003.10983}}
  (\bibinfo{year}{2020}).
\newblock


\bibitem[\protect\citeauthoryear{Chen and Zhang}{Chen and Zhang}{2019}]%
        {chen2019learning}
\bibfield{author}{\bibinfo{person}{Zhiqin Chen} {and} \bibinfo{person}{Hao
  Zhang}.} \bibinfo{year}{2019}\natexlab{}.
\newblock \showarticletitle{Learning implicit fields for generative shape
  modeling}.
\newblock \bibinfo{journal}{\emph{Proc. CVPR}} (\bibinfo{year}{2019}),
  \bibinfo{pages}{5939--5948}.
\newblock


\bibitem[\protect\citeauthoryear{Chibane, Mir, and Pons-Moll}{Chibane
  et~al\mbox{.}}{2020}]%
        {chibane2020ndf}
\bibfield{author}{\bibinfo{person}{Julian Chibane}, \bibinfo{person}{Aymen
  Mir}, {and} \bibinfo{person}{Gerard Pons-Moll}.}
  \bibinfo{year}{2020}\natexlab{}.
\newblock \showarticletitle{Neural Unsigned Distance Fields for Implicit
  Function Learning}.
\newblock \bibinfo{journal}{\emph{Proc. NeurIPS}} (\bibinfo{date}{December}
  \bibinfo{year}{2020}).
\newblock


\bibitem[\protect\citeauthoryear{Ericson}{Ericson}{2004}]%
        {ericson2004real}
\bibfield{author}{\bibinfo{person}{Christer Ericson}.}
  \bibinfo{year}{2004}\natexlab{}.
\newblock \bibinfo{booktitle}{\emph{Real-time collision detection}}.
\newblock \bibinfo{publisher}{CRC Press}.
\newblock


\bibitem[\protect\citeauthoryear{Genova, Cole, Vlasic, Sarna, Freeman, and
  Funkhouser}{Genova et~al\mbox{.}}{2019}]%
        {genova2019learning}
\bibfield{author}{\bibinfo{person}{Kyle Genova}, \bibinfo{person}{Forrester
  Cole}, \bibinfo{person}{Daniel Vlasic}, \bibinfo{person}{Aaron Sarna},
  \bibinfo{person}{William~T Freeman}, {and} \bibinfo{person}{Thomas
  Funkhouser}.} \bibinfo{year}{2019}\natexlab{}.
\newblock \showarticletitle{Learning shape templates with structured implicit
  functions}.
\newblock \bibinfo{journal}{\emph{Proc. ICCV}} (\bibinfo{year}{2019}).
\newblock


\bibitem[\protect\citeauthoryear{Gropp, Yariv, Haim, Atzmon, and Lipman}{Gropp
  et~al\mbox{.}}{2020}]%
        {gropp2020implicit}
\bibfield{author}{\bibinfo{person}{Amos Gropp}, \bibinfo{person}{Lior Yariv},
  \bibinfo{person}{Niv Haim}, \bibinfo{person}{Matan Atzmon}, {and}
  \bibinfo{person}{Yaron Lipman}.} \bibinfo{year}{2020}\natexlab{}.
\newblock \showarticletitle{Implicit geometric regularization for learning
  shapes}.
\newblock \bibinfo{journal}{\emph{Proc. ICML}} (\bibinfo{year}{2020}).
\newblock


\bibitem[\protect\citeauthoryear{Hart}{Hart}{1996}]%
        {hart1996sphere}
\bibfield{author}{\bibinfo{person}{John~C Hart}.}
  \bibinfo{year}{1996}\natexlab{}.
\newblock \showarticletitle{Sphere tracing: A geometric method for the
  antialiased ray tracing of implicit surfaces}.
\newblock \bibinfo{journal}{\emph{The Visual Computer}} (\bibinfo{year}{1996}).
\newblock


\bibitem[\protect\citeauthoryear{Hu, Wang, Qian, Pan, Guo, Liu, and Wang}{Hu
  et~al\mbox{.}}{2019}]%
        {hu2019mat}
\bibfield{author}{\bibinfo{person}{Jianwei Hu}, \bibinfo{person}{Bin Wang},
  \bibinfo{person}{Lihui Qian}, \bibinfo{person}{Yiling Pan},
  \bibinfo{person}{Xiaohu Guo}, \bibinfo{person}{Lingjie Liu}, {and}
  \bibinfo{person}{Wenping Wang}.} \bibinfo{year}{2019}\natexlab{}.
\newblock \showarticletitle{MAT-Net: Medial Axis Transform Network for 3D
  Object Recognition.}
\newblock \bibinfo{journal}{\emph{IJCAI}} (\bibinfo{year}{2019}).
\newblock


\bibitem[\protect\citeauthoryear{Jiang, Sud, Makadia, Huang, Nie{\ss}ner, and
  Funkhouser}{Jiang et~al\mbox{.}}{2020}]%
        {jiang2020local}
\bibfield{author}{\bibinfo{person}{Chiyu Jiang}, \bibinfo{person}{Avneesh Sud},
  \bibinfo{person}{Ameesh Makadia}, \bibinfo{person}{Jingwei Huang},
  \bibinfo{person}{Matthias Nie{\ss}ner}, {and} \bibinfo{person}{Thomas
  Funkhouser}.} \bibinfo{year}{2020}\natexlab{}.
\newblock \showarticletitle{Local implicit grid representations for 3d scenes}.
\newblock \bibinfo{journal}{\emph{Proc. CVPR}} (\bibinfo{year}{2020}).
\newblock


\bibitem[\protect\citeauthoryear{Kingma and Ba}{Kingma and Ba}{2015}]%
        {Kingma15}
\bibfield{author}{\bibinfo{person}{Diederik~P. Kingma} {and}
  \bibinfo{person}{Jimmy Ba}.} \bibinfo{year}{2015}\natexlab{}.
\newblock \showarticletitle{{Adam: {A} Method for Stochastic Optimisation}}.
\newblock \bibinfo{journal}{\emph{Proc. ICLR}} (\bibinfo{year}{2015}).
\newblock


\bibitem[\protect\citeauthoryear{Kornd{\"o}rfer, Sch{\"a}fer, Ganse, and
  Stamminger}{Kornd{\"o}rfer et~al\mbox{.}}{2014}]%
        {korndorfer2014enhanced}
\bibfield{author}{\bibinfo{person}{Benjamin Kornd{\"o}rfer},
  \bibinfo{person}{Johann Keinert~Henry Sch{\"a}fer}, \bibinfo{person}{Urs
  Ganse}, {and} \bibinfo{person}{Marc Stamminger}.}
  \bibinfo{year}{2014}\natexlab{}.
\newblock \showarticletitle{Enhanced sphere tracing}.
\newblock \bibinfo{journal}{\emph{STAG}} (\bibinfo{year}{2014}).
\newblock


\bibitem[\protect\citeauthoryear{Liu, Zhang, Peng, Shi, Pollefeys, and Cui}{Liu
  et~al\mbox{.}}{2020}]%
        {liu2020dist}
\bibfield{author}{\bibinfo{person}{Shaohui Liu}, \bibinfo{person}{Yinda Zhang},
  \bibinfo{person}{Songyou Peng}, \bibinfo{person}{Boxin Shi},
  \bibinfo{person}{Marc Pollefeys}, {and} \bibinfo{person}{Zhaopeng Cui}.}
  \bibinfo{year}{2020}\natexlab{}.
\newblock \showarticletitle{Dist: Rendering deep implicit signed distance
  function with differentiable sphere tracing}.
\newblock \bibinfo{journal}{\emph{Proc. CVPR}} (\bibinfo{year}{2020}).
\newblock


\bibitem[\protect\citeauthoryear{Ma, Han, Liu, and Zwicker}{Ma
  et~al\mbox{.}}{2020}]%
        {ma2020neural}
\bibfield{author}{\bibinfo{person}{Baorui Ma}, \bibinfo{person}{Zhizhong Han},
  \bibinfo{person}{Yu-Shen Liu}, {and} \bibinfo{person}{Matthias Zwicker}.}
  \bibinfo{year}{2020}\natexlab{}.
\newblock \showarticletitle{Neural-Pull: Learning Signed Distance Functions
  from Point Clouds by Learning to Pull Space onto Surfaces}.
\newblock \bibinfo{journal}{\emph{arXiv preprint arXiv:2011.13495}}
  (\bibinfo{year}{2020}).
\newblock


\bibitem[\protect\citeauthoryear{Macklin, Erleben, M{\"u}ller, Chentanez,
  Jeschke, and Corse}{Macklin et~al\mbox{.}}{2020}]%
        {macklin2020local}
\bibfield{author}{\bibinfo{person}{Miles Macklin}, \bibinfo{person}{Kenny
  Erleben}, \bibinfo{person}{Matthias M{\"u}ller}, \bibinfo{person}{Nuttapong
  Chentanez}, \bibinfo{person}{Stefan Jeschke}, {and} \bibinfo{person}{Zach
  Corse}.} \bibinfo{year}{2020}\natexlab{}.
\newblock \showarticletitle{Local Optimization for Robust Signed Distance Field
  Collision}.
\newblock \bibinfo{journal}{\emph{Proc. SIGGRAPH}} (\bibinfo{year}{2020}).
\newblock


\bibitem[\protect\citeauthoryear{Macklin, M{\"u}ller, Chentanez, and
  Kim}{Macklin et~al\mbox{.}}{2014}]%
        {macklin2014unified}
\bibfield{author}{\bibinfo{person}{Miles Macklin}, \bibinfo{person}{Matthias
  M{\"u}ller}, \bibinfo{person}{Nuttapong Chentanez}, {and}
  \bibinfo{person}{Tae-Yong Kim}.} \bibinfo{year}{2014}\natexlab{}.
\newblock \showarticletitle{Unified particle physics for real-time
  applications}.
\newblock \bibinfo{journal}{\emph{Proc. SIGGRAPH}} (\bibinfo{year}{2014}).
\newblock


\bibitem[\protect\citeauthoryear{Mescheder, Oechsle, Niemeyer, Nowozin, and
  Geiger}{Mescheder et~al\mbox{.}}{2019}]%
        {mescheder2019occupancy}
\bibfield{author}{\bibinfo{person}{Lars Mescheder}, \bibinfo{person}{Michael
  Oechsle}, \bibinfo{person}{Michael Niemeyer}, \bibinfo{person}{Sebastian
  Nowozin}, {and} \bibinfo{person}{Andreas Geiger}.}
  \bibinfo{year}{2019}\natexlab{}.
\newblock \showarticletitle{Occupancy Networks: Learning 3D Reconstruction in
  Function Space}.
\newblock \bibinfo{journal}{\emph{Proc. CVPR}} (\bibinfo{year}{2019}).
\newblock


\bibitem[\protect\citeauthoryear{Michalkiewicz, Pontes, Jack, Baktashmotlagh,
  and Eriksson}{Michalkiewicz et~al\mbox{.}}{2019}]%
        {michalkiewicz2019implicit}
\bibfield{author}{\bibinfo{person}{Mateusz Michalkiewicz},
  \bibinfo{person}{Jhony~K Pontes}, \bibinfo{person}{Dominic Jack},
  \bibinfo{person}{Mahsa Baktashmotlagh}, {and} \bibinfo{person}{Anders
  Eriksson}.} \bibinfo{year}{2019}\natexlab{}.
\newblock \showarticletitle{Implicit surface representations as layers in
  neural networks}.
\newblock \bibinfo{journal}{\emph{Proc. ICCV}} (\bibinfo{year}{2019}).
\newblock


\bibitem[\protect\citeauthoryear{Mildenhall, Srinivasan, Tancik, Barron,
  Ramamoorthi, and Ng}{Mildenhall et~al\mbox{.}}{2020}]%
        {mildenhall2020nerf}
\bibfield{author}{\bibinfo{person}{Ben Mildenhall}, \bibinfo{person}{Pratul~P
  Srinivasan}, \bibinfo{person}{Matthew Tancik}, \bibinfo{person}{Jonathan~T
  Barron}, \bibinfo{person}{Ravi Ramamoorthi}, {and} \bibinfo{person}{Ren Ng}.}
  \bibinfo{year}{2020}\natexlab{}.
\newblock \showarticletitle{NeRF: Representing Scenes as Neural Radiance Fields
  for View Synthesis}.
\newblock \bibinfo{journal}{\emph{Proc. ECCV}} (\bibinfo{year}{2020}).
\newblock


\bibitem[\protect\citeauthoryear{Mitchell}{Mitchell}{1991}]%
        {fps}
\bibfield{author}{\bibinfo{person}{Don~P Mitchell}.}
  \bibinfo{year}{1991}\natexlab{}.
\newblock \showarticletitle{Spectrally optimal sampling for distribution ray
  tracing}.
\newblock \bibinfo{journal}{\emph{Proc. of SIGGRAPH}} (\bibinfo{year}{1991}).
\newblock


\bibitem[\protect\citeauthoryear{Niemeyer, Mescheder, Oechsle, and
  Geiger}{Niemeyer et~al\mbox{.}}{2020}]%
        {Niemeyer2020DVR}
\bibfield{author}{\bibinfo{person}{Michael Niemeyer}, \bibinfo{person}{Lars
  Mescheder}, \bibinfo{person}{Michael Oechsle}, {and} \bibinfo{person}{Andreas
  Geiger}.} \bibinfo{year}{2020}\natexlab{}.
\newblock \showarticletitle{Differentiable Volumetric Rendering: Learning
  Implicit 3D Representations without 3D Supervision}.
\newblock \bibinfo{journal}{\emph{Proc. CVPR}} (\bibinfo{year}{2020}).
\newblock


\bibitem[\protect\citeauthoryear{Oechsle, Mescheder, Niemeyer, Strauss, and
  Geiger}{Oechsle et~al\mbox{.}}{2019}]%
        {Oechsle2019texture}
\bibfield{author}{\bibinfo{person}{Michael Oechsle}, \bibinfo{person}{Lars
  Mescheder}, \bibinfo{person}{Michael Niemeyer}, \bibinfo{person}{Thilo
  Strauss}, {and} \bibinfo{person}{Andreas Geiger}.}
  \bibinfo{year}{2019}\natexlab{}.
\newblock \showarticletitle{Texture fields: Learning texture representations in
  function space}.
\newblock \bibinfo{journal}{\emph{Proc. ICCV}} (\bibinfo{year}{2019}).
\newblock


\bibitem[\protect\citeauthoryear{Oechsle, Peng, and Geiger}{Oechsle
  et~al\mbox{.}}{2021}]%
        {oechsle2021unisurf}
\bibfield{author}{\bibinfo{person}{Michael Oechsle}, \bibinfo{person}{Songyou
  Peng}, {and} \bibinfo{person}{Andreas Geiger}.}
  \bibinfo{year}{2021}\natexlab{}.
\newblock \showarticletitle{UNISURF: Unifying Neural Implicit Surfaces and
  Radiance Fields for Multi-View Reconstruction}.
\newblock \bibinfo{journal}{\emph{arXiv preprint arXiv:2104.10078}}
  (\bibinfo{year}{2021}).
\newblock


\bibitem[\protect\citeauthoryear{Park, Florence, Straub, Newcombe, and
  Lovegrove}{Park et~al\mbox{.}}{2019}]%
        {park2019deepsdf}
\bibfield{author}{\bibinfo{person}{Jeong~Joon Park}, \bibinfo{person}{Peter
  Florence}, \bibinfo{person}{Julian Straub}, \bibinfo{person}{Richard
  Newcombe}, {and} \bibinfo{person}{Steven Lovegrove}.}
  \bibinfo{year}{2019}\natexlab{}.
\newblock \showarticletitle{DeepSDF: Learning Continuous Signed Distance
  Functions for Shape Representation}.
\newblock \bibinfo{journal}{\emph{Proc. CVPR}} (\bibinfo{year}{2019}).
\newblock


\bibitem[\protect\citeauthoryear{Peng, Niemeyer, Mescheder, Pollefeys, and
  Geiger}{Peng et~al\mbox{.}}{2020}]%
        {peng2020convolutional}
\bibfield{author}{\bibinfo{person}{Songyou Peng}, \bibinfo{person}{Michael
  Niemeyer}, \bibinfo{person}{Lars Mescheder}, \bibinfo{person}{Marc
  Pollefeys}, {and} \bibinfo{person}{Andreas Geiger}.}
  \bibinfo{year}{2020}\natexlab{}.
\newblock \showarticletitle{Convolutional occupancy networks}.
\newblock \bibinfo{journal}{\emph{Proc. ECCV}} (\bibinfo{year}{2020}).
\newblock


\bibitem[\protect\citeauthoryear{Pharr and Green}{Pharr and Green}{2004}]%
        {ambientocclusion}
\bibfield{author}{\bibinfo{person}{Matt Pharr} {and} \bibinfo{person}{Simon
  Green}.} \bibinfo{year}{2004}\natexlab{}.
\newblock \showarticletitle{Ambient occlusion}.
\newblock \bibinfo{journal}{\emph{GPU Gems}} (\bibinfo{year}{2004}).
\newblock


\bibitem[\protect\citeauthoryear{Rahaman, Baratin, Arpit, Draxler, Lin,
  Hamprecht, Bengio, and Courville}{Rahaman et~al\mbox{.}}{2019}]%
        {rahaman2019spectral}
\bibfield{author}{\bibinfo{person}{Nasim Rahaman}, \bibinfo{person}{Aristide
  Baratin}, \bibinfo{person}{Devansh Arpit}, \bibinfo{person}{Felix Draxler},
  \bibinfo{person}{Min Lin}, \bibinfo{person}{Fred Hamprecht},
  \bibinfo{person}{Yoshua Bengio}, {and} \bibinfo{person}{Aaron Courville}.}
  \bibinfo{year}{2019}\natexlab{}.
\newblock \showarticletitle{On the spectral bias of neural networks}.
\newblock \bibinfo{journal}{\emph{Proc. ICML}} (\bibinfo{year}{2019}).
\newblock


\bibitem[\protect\citeauthoryear{Rebain, Angles, Valentin, Vining,
  Peethambaran, Izadi, and Tagliasacchi}{Rebain et~al\mbox{.}}{2019}]%
        {lsmat}
\bibfield{author}{\bibinfo{person}{Daniel Rebain}, \bibinfo{person}{Baptiste
  Angles}, \bibinfo{person}{Julien Valentin}, \bibinfo{person}{Nicholas
  Vining}, \bibinfo{person}{Jiju Peethambaran}, \bibinfo{person}{Shahram
  Izadi}, {and} \bibinfo{person}{Andrea Tagliasacchi}.}
  \bibinfo{year}{2019}\natexlab{}.
\newblock \showarticletitle{LSMAT: least squares medial axis transform}.
\newblock \bibinfo{journal}{\emph{Computer Graphics Forum}}
  (\bibinfo{year}{2019}).
\newblock


\bibitem[\protect\citeauthoryear{Rebain, Jiang, Yazdani, Li, Yi, and
  Tagliasacchi}{Rebain et~al\mbox{.}}{2021}]%
        {rebain2021derf}
\bibfield{author}{\bibinfo{person}{Daniel Rebain}, \bibinfo{person}{Wei Jiang},
  \bibinfo{person}{Soroosh Yazdani}, \bibinfo{person}{Ke Li},
  \bibinfo{person}{Kwang~Moo Yi}, {and} \bibinfo{person}{Andrea Tagliasacchi}.}
  \bibinfo{year}{2021}\natexlab{}.
\newblock \showarticletitle{DeRF: Decomposed Radiance Fields}.
\newblock \bibinfo{journal}{\emph{Proc. CVPR}} (\bibinfo{year}{2021}).
\newblock


\bibitem[\protect\citeauthoryear{Seyb, Jacobson, Nowrouzezahrai, and
  Jarosz}{Seyb et~al\mbox{.}}{2019}]%
        {seyb2019non}
\bibfield{author}{\bibinfo{person}{Dario Seyb}, \bibinfo{person}{Alec
  Jacobson}, \bibinfo{person}{Derek Nowrouzezahrai}, {and}
  \bibinfo{person}{Wojciech Jarosz}.} \bibinfo{year}{2019}\natexlab{}.
\newblock \showarticletitle{Non-linear sphere tracing for rendering deformed
  signed distance fields}.
\newblock \bibinfo{journal}{\emph{ACM Transactions on Graphics}}
  (\bibinfo{year}{2019}).
\newblock


\bibitem[\protect\citeauthoryear{Siddiqi and Pizer}{Siddiqi and Pizer}{2008}]%
        {siddiqi2008medial}
\bibfield{author}{\bibinfo{person}{Kaleem Siddiqi} {and}
  \bibinfo{person}{Stephen Pizer}.} \bibinfo{year}{2008}\natexlab{}.
\newblock \bibinfo{booktitle}{\emph{Medial representations: mathematics,
  algorithms and applications}}.
\newblock \bibinfo{publisher}{Springer Science \& Business Media}.
\newblock


\bibitem[\protect\citeauthoryear{Sitzmann, Chan, Tucker, Snavely, and
  Wetzstein}{Sitzmann et~al\mbox{.}}{2020a}]%
        {sitzmann2020metasdf}
\bibfield{author}{\bibinfo{person}{Vincent Sitzmann}, \bibinfo{person}{Eric~R
  Chan}, \bibinfo{person}{Richard Tucker}, \bibinfo{person}{Noah Snavely},
  {and} \bibinfo{person}{Gordon Wetzstein}.} \bibinfo{year}{2020}\natexlab{a}.
\newblock \showarticletitle{Metasdf: Meta-learning signed distance functions}.
\newblock \bibinfo{journal}{\emph{Proc. NeurIPS}} (\bibinfo{year}{2020}).
\newblock


\bibitem[\protect\citeauthoryear{Sitzmann, Martel, Bergman, Lindell, and
  Wetzstein}{Sitzmann et~al\mbox{.}}{2020b}]%
        {sitzmann2019siren}
\bibfield{author}{\bibinfo{person}{Vincent Sitzmann},
  \bibinfo{person}{Julien~N.P. Martel}, \bibinfo{person}{Alexander~W. Bergman},
  \bibinfo{person}{David~B. Lindell}, {and} \bibinfo{person}{Gordon
  Wetzstein}.} \bibinfo{year}{2020}\natexlab{b}.
\newblock \showarticletitle{Implicit Neural Representations with Periodic
  Activation Functions}.
\newblock \bibinfo{journal}{\emph{Proc. NeurIPS}} (\bibinfo{year}{2020}).
\newblock


\bibitem[\protect\citeauthoryear{Sitzmann, Zollh{\"o}fer, and
  Wetzstein}{Sitzmann et~al\mbox{.}}{2019}]%
        {sitzmann2019srns}
\bibfield{author}{\bibinfo{person}{Vincent Sitzmann}, \bibinfo{person}{Michael
  Zollh{\"o}fer}, {and} \bibinfo{person}{Gordon Wetzstein}.}
  \bibinfo{year}{2019}\natexlab{}.
\newblock \showarticletitle{Scene Representation Networks: Continuous
  3D-Structure-Aware Neural Scene Representations}.
\newblock \bibinfo{journal}{\emph{Proc. NeurIPS}} (\bibinfo{year}{2019}).
\newblock


\bibitem[\protect\citeauthoryear{Tagliasacchi, Delame, Spagnuolo, Amenta, and
  Telea}{Tagliasacchi et~al\mbox{.}}{2016}]%
        {survey}
\bibfield{author}{\bibinfo{person}{Andrea Tagliasacchi},
  \bibinfo{person}{Thomas Delame}, \bibinfo{person}{Michela Spagnuolo},
  \bibinfo{person}{Nina Amenta}, {and} \bibinfo{person}{Alexandru Telea}.}
  \bibinfo{year}{2016}\natexlab{}.
\newblock \showarticletitle{3D Skeletons: A State-of-the-Art Report}.
\newblock \bibinfo{journal}{\emph{Proc. Eurographics}} (\bibinfo{year}{2016}).
\newblock


\bibitem[\protect\citeauthoryear{Takikawa, Litalien, Yin, Kreis, Loop,
  Nowrouzezahrai, Jacobson, McGuire, and Fidler}{Takikawa
  et~al\mbox{.}}{2021}]%
        {takikawa2021neural}
\bibfield{author}{\bibinfo{person}{Towaki Takikawa}, \bibinfo{person}{Joey
  Litalien}, \bibinfo{person}{Kangxue Yin}, \bibinfo{person}{Karsten Kreis},
  \bibinfo{person}{Charles Loop}, \bibinfo{person}{Derek Nowrouzezahrai},
  \bibinfo{person}{Alec Jacobson}, \bibinfo{person}{Morgan McGuire}, {and}
  \bibinfo{person}{Sanja Fidler}.} \bibinfo{year}{2021}\natexlab{}.
\newblock \showarticletitle{Neural geometric level of detail: Real-time
  rendering with implicit 3D shapes}.
\newblock \bibinfo{journal}{\emph{Proc. CVPR}} (\bibinfo{year}{2021}).
\newblock


\bibitem[\protect\citeauthoryear{Tancik, Srinivasan, Mildenhall,
  Fridovich-Keil, Raghavan, Singhal, Ramamoorthi, Barron, and Ng}{Tancik
  et~al\mbox{.}}{2020}]%
        {tancik2020fourier}
\bibfield{author}{\bibinfo{person}{Matthew Tancik}, \bibinfo{person}{Pratul~P
  Srinivasan}, \bibinfo{person}{Ben Mildenhall}, \bibinfo{person}{Sara
  Fridovich-Keil}, \bibinfo{person}{Nithin Raghavan}, \bibinfo{person}{Utkarsh
  Singhal}, \bibinfo{person}{Ravi Ramamoorthi}, \bibinfo{person}{Jonathan~T
  Barron}, {and} \bibinfo{person}{Ren Ng}.} \bibinfo{year}{2020}\natexlab{}.
\newblock \showarticletitle{Fourier features let networks learn high frequency
  functions in low dimensional domains}.
\newblock \bibinfo{journal}{\emph{Proc. NeurIPS}} (\bibinfo{year}{2020}).
\newblock


\bibitem[\protect\citeauthoryear{Thiery, Guy, and Boubekeur}{Thiery
  et~al\mbox{.}}{2013}]%
        {thiery2013sphere}
\bibfield{author}{\bibinfo{person}{Jean-Marc Thiery},
  \bibinfo{person}{{\'E}milie Guy}, {and} \bibinfo{person}{Tamy Boubekeur}.}
  \bibinfo{year}{2013}\natexlab{}.
\newblock \showarticletitle{Sphere-meshes: Shape approximation using spherical
  quadric error metrics}.
\newblock \bibinfo{journal}{\emph{ACM TOG}} (\bibinfo{year}{2013}).
\newblock


\bibitem[\protect\citeauthoryear{Thiery, Guy, Boubekeur, and Eisemann}{Thiery
  et~al\mbox{.}}{2016}]%
        {thiery2016animated}
\bibfield{author}{\bibinfo{person}{Jean-Marc Thiery},
  \bibinfo{person}{{\'E}milie Guy}, \bibinfo{person}{Tamy Boubekeur}, {and}
  \bibinfo{person}{Elmar Eisemann}.} \bibinfo{year}{2016}\natexlab{}.
\newblock \showarticletitle{Animated mesh approximation with sphere-meshes}.
\newblock \bibinfo{journal}{\emph{ACM TOG}} (\bibinfo{year}{2016}).
\newblock


\bibitem[\protect\citeauthoryear{Tkach, Pauly, and Tagliasacchi}{Tkach
  et~al\mbox{.}}{2016}]%
        {hmodel}
\bibfield{author}{\bibinfo{person}{Anastasia Tkach}, \bibinfo{person}{Mark
  Pauly}, {and} \bibinfo{person}{Andrea Tagliasacchi}.}
  \bibinfo{year}{2016}\natexlab{}.
\newblock \showarticletitle{Sphere-Meshes for Real-Time Hand Modeling and
  Tracking}.
\newblock \bibinfo{journal}{\emph{Proc. SIGGRAPH Asia}} (\bibinfo{year}{2016}).
\newblock


\bibitem[\protect\citeauthoryear{Yariv, Kasten, Moran, Galun, Atzmon, Ronen,
  and Lipman}{Yariv et~al\mbox{.}}{2020}]%
        {yariv2020multiview}
\bibfield{author}{\bibinfo{person}{Lior Yariv}, \bibinfo{person}{Yoni Kasten},
  \bibinfo{person}{Dror Moran}, \bibinfo{person}{Meirav Galun},
  \bibinfo{person}{Matan Atzmon}, \bibinfo{person}{Basri Ronen}, {and}
  \bibinfo{person}{Yaron Lipman}.} \bibinfo{year}{2020}\natexlab{}.
\newblock \showarticletitle{Multiview neural surface reconstruction by
  disentangling geometry and appearance}.
\newblock \bibinfo{journal}{\emph{Proc. NeurIPS}} (\bibinfo{year}{2020}).
\newblock


\end{thebibliography}
\end{document}